\newcommand{\pr}[2]{\ensuremath{\textit{\bf Pr}_{#1}\left[ #2 \right]}\xspace}
\newcommand{\e}[2]{\ensuremath{\textit{\bf E}_{#1}\left[ #2 \right]}\xspace}
\newcommand{\sd}[2]{\ensuremath{\bm{\sigma}_{#1}\left[ #2 \right]}\xspace}
\newtheorem{theorem}{Theorem}
\newtheorem{definition}{Definition}
\newtheorem{corollary}{Corollary}
\newtheorem{lemma}{Lemma}
\newtheorem{proposition}{Proposition}
\begin{document}

\title{Bayesian Incentive Compatibility via Fractional Assignments}
\author{Xiaohui Bei~\thanks{Institute for Theoretical Computer Science,
Tsinghua University. Email: {\tt bxh08@mails.tsinghua.edu.cn}.
Supported in part by the National Natural Science Foundation of 
China Grant 60553001, the National Basic Research Program of China Grant 
2007CB807900, 2007CB807901.}\\
\and 
Zhiyi Huang~\thanks{Computer and Information Science,
University of Pennsylvania. Email: {\tt hzhiyi@cis.upenn.edu}.
This material is based upon work supported by the National Science Foundation 
under Grant No. 0716172}}
\date{}

\begin{titlepage}
	\thispagestyle{empty}
	\maketitle

\begin{abstract}
	\thispagestyle{empty}
Very recently, Hartline and Lucier \cite{hartline2009bayesian} studied
single-parameter mechanism design problems in the Bayesian setting. They
proposed a black-box reduction that converted Bayesian approximation 
algorithms into Bayesian incentive compatible (BIC) mechanisms and preserved
the expected social welfare. It remains a open question if one can find 
similar reduction in the more important multi-parameter setting.
In this paper, we give positive answer to this question
when the prior distribution has finite and small support.
We propose a black-box reduction for designing BIC multi-parameter
mechanisms. The reduction converts any algorithm into an
$\epsilon$-BIC mechanism with only marginal loss in social welfare. 
As a result, for combinatorial auctions with sub-additive agents
we get an $\epsilon$-BIC mechanism that achieves constant approximation.
\end{abstract}
\end{titlepage}

\section{Introduction}

In this paper, we consider the problem of designing computationally efficient
and truthful mechanism for multi-parameter mechanism design problems in
the Bayesian setting. 

Suppose a major Internet search service provider wants to sell multiple
advertisement slots to a number of companies. From the history of previous 
transactions, we can estimate a prior distribution of each company's
valuation of the advertisement slots. What mechanism shall
the search service provider use to obtain good social welfare, or good revenue?

This is a typical multi-parameter mechanism design problem.
In general, we consider the scenario in which a principal wants to sell 
a number of different services to multiple 
heterogeneous strategic agents subject to some feasibility constraints (e.g.
total cost of providing these services must not exceed the budget), so
that some desired objective (e.g. social welfare, revenue, residual surplus) 
is achieved. If we interpret this as simply a combinatorial optimization
problem, then there exists approximation algorithms for many of these problems.
And the approximation ratios of many of these algorithms are tight subject
to certain computational complexity assumptions. However, if we wants to 
design protocols of allocations and setting prices in order to
achieve the desired objective in the equilibrium strategic behavior of the 
agents, we usually have much worse approximation ratio.
Therefore, it is natural to ask the following question:
\begin{quote}
	{\em Can we convert any algorithm into a truthful mechanism while 
	preserving the performance, say, social welfare?}
\end{quote}

Unfortunately, from previous work we learn that this is impossible for
some problems. Papadimitriou et al. 
\cite{papadimitriou2008hardness} showed the first significant gap between
the performance of deterministic algorithms and deterministic truthful
mechanisms via the Combinatorial Public Project problem.

\paragraph{Bayesian setting.}

The standard game theoretic model for incomplete information is the 
Bayesian setting, in which the agent valuations 
are drawn from a publicly 
known distribution. The standard solution concept in this setting is 
{\em Bayesian-Nash Equilibrium}. In a Bayesian-Nash equilibrium, each player 
maximizes its expected payoff by following the strategy profile
given the prior distribution of the agent valuations.

\medskip

In this paper, we will consider multi-parameter welfare-preserving
algorithm/mechanism reductions 
in the Bayesian setting, and weaken truthfulness
constraint from Incentive Compatibility (IC) to Bayesian Incentive 
Compatibility (BIC), which means truth telling is the equilibrium strategy
over random choice of the mechanism as well as the random realization of 
the other agent valuations. In many real world applications
such as online auctions, AdWords auctions, spectrum auctions etc., the
availability of data of past transactions make it possible to obtain
good estimation of the prior distribution of the agent valuations.
Thus, revisiting the algorithm/mechanism reduction problem 
in the Bayesian setting is of both theoretical and practical importance.

Hartline and Lucier \cite{hartline2009bayesian} studied 
this problem in the single-parameter
setting. They showed a brilliant black-box reduction from any approximation 
algorithm
to BIC mechanism that preserves the performance with respect to social welfare
maximization. In this paper, we prove that similar reduction also exists for
the realm of multi-parameter mechanism design for social welfare! Moreover,
we can also obtain BIC mechanism for revenue or residual surplus via some
variants of our black-box reduction.

\paragraph{Our results and technique.} 
Our main result is a black-box reduction 
that converts algorithms into BIC mechanisms with essentially 
the same social welfare for arbitrary multi-parameter mechanism design problem
in the Bayesian setting. More concretely, given an 
algorithm $\mathcal{A}$ that provides $SW^\mathcal{A}$ social welfare,
the reduction provides a mechanism that gives 
$SW^\mathcal{A} - \epsilon$ social welfare and is $\epsilon$-BIC.
The running time is polynomial in the input size and $1/\epsilon$. 
This resolves an open problem in \cite{hartline2009bayesian}. The key idea is 
to decouple the reported valuations and the input valuations for the algorithm $\mathcal{A}$. 
When the reported valuations are $v_1, v_2, \dots, 
v_n$, we will manipulate the valuations via some carefully designed 
intermediate algorithms $\mathcal{B}_1, \dots, \mathcal{B}_n$, and use allocation
$\mathcal{A}(\mathcal{B}_1(v_1), \dots, \mathcal{B}_n(v_n))$. 
We prove that there exist intermediate algorithms
$\mathcal{B}_1, \dots, \mathcal{B}_n$ so that there are prices that achieve BIC.
Under certain conditions, the marginal loss factor in social welfare can be made multiplicative.

As an application of this reduction, we get a $(\frac{1}{2} - \epsilon)$-approximate and 
$\epsilon v_{max}$-BIC mechanism for social welfare maximization in 
combinatorial auctions with sub-additive agents. 
For the more restricted case of fractionally
sub-additive agents, we obtain $(1 - \frac{1}{e} - \epsilon)$-approximate
mechanism. 

\paragraph{Related work.}
The problem of maximizing social welfare against strategic agents is one of
the oldest and most famous problems in the area of mechanism design. It has
been extensively studied by the economists in both Bayesian and prior-free
setting without considering computational power constraint. The celebrated
VCG mechanism \cite{clarke1971multipart, groves1973incentives,
vickrey1961counterspeculation} which guarantees optimal social welfare and
incentive compatibility is one of the most exciting results 
in this domain. However, implementing the VCG mechanism is NP-hard
in general. This is one of the reasons that VCG mechanism is rarely used in
practice despite of its lovely theoretical features.

In the past decade, computer scientists introduced many novel techniques in
the prior-free setting to design computationally efficient mechanisms
that provide incentive compatibility and/or good approximation to
optimal social welfare for various families of valuation functions.

On the one hand, 
Dobzinski, Nisan and Schapira \cite{dobzinski2006truthful} proposed
poly-time mechanisms which achieved $\Omega(1 / \sqrt{n})$-approximation for
general agents and $\Omega(1 / \log^2 n)$-approximation for sub-modular 
agnets. Dobzinski \cite{dobzinski-two} later proposed a truthful mechanism 
which achieved an improved
$\widetilde{\Omega}(1 / \log n)$-approximation for a strictly broader class of
sub-additive agents.

On the other hand, 
if we focus on the algorithmic problem of maximizing social welfare
assuming all valuations are truthfully revealed, then the algorithm by
Dobzinski, Nisan and Schapira \cite{dobzinski2005approximation} 
gave $\Omega(1 / \sqrt{n})$-approximation for
general case and $\Omega(1 / \log n)$-approximation for sub-additive agents.
The latter approximation ratio is later improved to $\frac{1}{2}$
for sub-additive agents \cite{feige2006maximizing} and $(1 - \frac{1}{e})$ for
the more restricted  class of fractionally sub-additive agents 
\cite{dobzinski-two,feige2006approximation}.

The above results suggest that there exists a gap between the performance of
the best poly-time algorithms and that of the best poly-time and incentive 
compatible mechanism.
As an effort to study the relation between designing algorithms and
designing truthful mechanisms with good approximation ratio, Lavi and Swamy
\cite{lavi2005truthful} proposed a meta-mechanism that converted
strong rounding algorithms for the standard LP of social welfare maximization
into IC mechanisms. However,
their approach required the rounding algorithm to work for arbitrary valuation
functions. This requirement prevents their technique to get good
approximation beyond cases of general valuations and additive valuations 
(via a different linear program). But the more interesting classes of 
valuations (e.g. sub-additive valuations and sub-modular valuations) lies between these two
extremes. Another notable attempt on reducing IC mechanism design to algorithm design
is the very recent work by Dughmi and Roughgarden~\cite{dughmi2010black}. They proved that
for any packing problem that admitted an FPTAS, there was an IC mechanism that was also an FPTAS.

Most of the previous effort from computer scientists has focused on the
prior-free setting. Until very recently, there has been a few work
that brought more and more Bayesian analysis into the field of algorithmic
mechanism design. Hartline
and Lucier \cite{hartline2009bayesian} gave a black-box reduction that
converted any Bayesian approximation algorithm into a Bayesian incentive
compatible mechanism that preserved 
social welfare in the single parameter domain.
Bhattacharya et al. \cite{bhattacharya2009budget} studied the revenue
maximization problem for auctioning heterogeneous items when the valuations
of the agents were additive. Their result gave constant approximation in
the Bayesian setting even when the agents had public known budget constraints.
Chawla et al. \cite{chawla2009multi} considered the revenue maximization 
problem in the multi-dimensional multi-unit auctions. 
They introduced mechanism that
gave constant approximation in various settings via sequential posted pricing.

Finally, in concurrent and independent work, Hartline et al. \cite{jason2010} 
study the relation of algorithm and mechanism in Bayesian setting and propose
similar reduction. In the discrete support setting that is considered 
in this paper, they use essentially the same reduction. However,
their work achieves perfectly BIC instead of $\epsilon$-BIC. They also extend
the reduction to the more general continuous support setting.

\section{Preliminaries}

\subsection{Notations.}

We use $\{x_i\}_{1 \leq i \leq n}$ to denote an array of 
size $n$. We also use the natural extension of this notation for 
multi-dimensional arrays. 
We will use bold font $\bm{x}$ to denote a vector $(x_1, \dots, x_n)$.
We let $\Delta(S)$ denote the set of distributions over the elements in a set 
$S$. For a random variable $x$, we let $\e{}{x}$
denote its expectation and let $\sd{}{x}$ denote its standard deviation. 
We use subscripts to represent the random choices over which we consider
the expectation and variance. For instance, $\e{y \sim F}{x}$
is the expectation of $x$ when $y$ is drawn from distribution
$F$. We sometimes use $\e{y}{x}$ for short when the distribution $F$ is 
clear from the context.

\subsection{Model and definitions.}

In this section, we will formally introduce the model in this 
paper. We study the general multi-parameter mechanism design problems. In a
multi-parameter mechanism design problem, a principal wants to sell a set of 
services to multiple heterogeneous agents in order to optimize the desired 
objective (e.g. social welfare, revenue, residual surplus, etc.). 
A Bayesian multi-parameter mechanism design problem with $n$ agents is defined 
by a tuple $\langle \bm{I}, \bm{J}, \bm{V}, \bm{F} \rangle$. 

\begin{itemize}
	\item $\bm{I} = (I_1, \dots, I_n)$:  
		The set of services that the principal wants to sell to the agents. 

		Since we can impose arbitrary feasibility constraints on the 
		allocations, 
		we assume without loss of generality that the services are 
		partitioned into $n$ disjoint sets $I_1, \dots, I_n$ so that the 
		services in $I_i$ only aim for agent $i$, and each agent $i$ is 
		interested in any one of these services.

	\item $\bm{J} \subseteq I_1 \times \dots \times I_n$: 
		The set of feasible allocations.
		
	\item $\bm{V} = V_1 \times \dots \times V_n$:
		The space of agent valuations.
		
		We let $V_i \subseteq \mathbb{R}^{I_i}$ denote
		the set of possible valuations of agent $i$. 
		We let $v_{max}$ denote the maximal valuation, that is,
		$v_{max} = \max_{i, v \in V_i, S \in I_i} v(S)$ 
	\item $\bm{F} = F_1 \times F_2 \times \dots \times F_n$: The
		joint prior distribution of the agent valuations.
		
		We assume the prior distribution is a product distribution.
		We let $F_i \in \Delta(V_i)$ denote the prior distribution
		of the valuation of agent $i$. In this paper, 
		we only consider distributions with finite and polynomially large
		support. We will assume without 
		loss of generality that the support of each distribution $F_i$ is 
		$\{v^1_i, \dots, v^\ell_i\}$. Suppose $v_i \sim F_i$, We will let
		$f_i(t)$ denote the probability that $v_i = v^t_i$.
\end{itemize}

For example, in the {\em combinatorial auction}
problem with $n$ agents and $m$ items, we let $[m] = \{1, 2, \dots, m\}$
denote the set of items. The set of services for each agent $i$ is
the set of all subsets of items, that is, $I_i = 2^{[m]}$, $1 \leq i \leq n$.
The set of feasible allocations is
$$\bm{J} = \{(S_1, \dots, S_n) : S_i \in I_i, S_i \cap S_j = \emptyset\}
\enspace.$$

The set of valuations, $V_i$, is the set of mappings from subset of items
$I_i$ to $\mathbb{R}_+$ that are monotone ($v_i(S) \leq v_i(T)$ for $S \subseteq T$)
and normalized ($v_i(\emptyset) = 0$).
We usually assume that the valuations in $\bigcup_i V_i$ satisfies certain 
properties, e.g. sub-additivity, sub-modularity, etc. 

\paragraph{Algorithm.} An algorithm for a multi-parameter mechanism design 
problem $\langle \bm{I}, \bm{J}, \bm{V}, \bm{F}\rangle$ is a protocol (may
or may not be randomized) that
takes a realization of agent valuations $\bm{v} \in \bm{V}$ as input, and
outputs a feasible allocation $\bm{S} \in \bm{J}$. 

\paragraph{Mechanism.} A mechanism is an interactive protocol (may or may not
be randomized) between the 
principal and the agents so that the principal can retrieve information from
the agents (presumably via their bids), and determine an allocation of 
services $\bm{S} \in \bm{J}$ and a collection of prices 
$\bm{p} = (p_1, \dots, p_n)$. The extra challenge for mechanism design,
compared to algorithm design, is to retrieve genuine valuations from the
agents and handle their strategic behavior.

For each $1 \leq i \leq n$, we will assume the prior distribution $F_i$ is 
{public known}. But the actual realization $v_i \sim F_i$ is {\em private}
information of agent $i$. Each agent $i$ aims to maximizes the quasi-linear
utility $v_i(S_i) - p_i$, where $S_i$ is the service it gets and $p_i$ is
the price. Thus, the agents may not reveal their genuine valuations if
manipulating their bids strategically can increase their utility.

\paragraph{Objectives.}
We will consider
three different objectives: social welfare, revenue, and residual surplus.
The expected {\em social welfare} of a mechanism $\mathcal{M}$ is
$$SW^\mathcal{M} = \e{\bm{v} \sim \bm{F}, (\bm{S}, \bm{p}) \sim \mathcal{M}(\bm{v})}
{\sum_{i = 1}^n v_i(S_i)} \enspace.$$

Similarly, we will let $SW^\mathcal{A}$ denote the expected
social welfare of an algorithm $\mathcal{A}$.

\begin{definition}
	An algorithm $\mathcal{A}$ is $\alpha$-approximate in social welfare for 
	a multi-parameter mechanism design problem 
	$\langle \bm{I}, \bm{J}, \bm{V}, \bm{F} \rangle$, if 
	$SW^\mathcal{A} \geq \alpha \, \mathsf{OPT}$.
\end{definition}

The expected {\em revenue} of a mechanism is
$$R^\mathcal{M} = \e{\bm{v} \sim \bm{F}, (\bm{S}, \bm{p}) \sim \mathcal{M}(\bm{v})}
{\sum_{i = 1}^n p_i} \enspace.$$

The last objective, {\em residual surplus}, was recently proposed by Hartline
and Roughgarden \cite{hartline2008optimal} as an alternative objective in
the flavour of social welfare. In the residual surplus maximization problem, 
the principal aims to maximize the sum of the agents' utilities instead of 
the sum of their valuations. The expected {\em residual surplus} is
$$RS^\mathcal{M} = \e{\bm{v} \sim \bm{F}, (\bm{S}, \bm{p}) \sim \mathcal{M}(\bm{v})}
{\sum_{i = 1}^n \left(v_i(S) - p_i\right)} \enspace.$$

We will let $\mathsf{OPT}$ denote the optimal social welfare, that is, $\mathsf{OPT}
= \max_\mathcal{M} SW^\mathcal{M}$. Since both revenue and residual surplus are 
upper-bounded by social welfare. We will use $\mathsf{OPT}$ as our benchmark 
for all three objectives.

\paragraph{Solution concepts.}

Ideally, a mechanism shall provide incentive for the agents to reveal their
valuations truthfully. In this section, we will formalize this requirement
by introducing the game-theoretical solution concepts that we use in this 
paper.

\begin{definition}
	A mechanism $\mathcal{M}$ is {\em Bayesian incentive compatible (BIC)} if
	for each agent $i$ and any two valuations $v_i, \widetilde{v}_i \in V_i$, 
	we have
	$$\e{\bm{v}_{-i},(\bm{S},\bm{p}) \sim \mathcal{M}(v_i,\bm{v}_{-i})}
	{v_i(S_i)-p_i} \geq \e{\bm{v}_{-i},(\bm{S},\bm{p}) \sim 
	\mathcal{M}(\widetilde{v}_i, \bm{v}_{-i})} {v_i(S_i)-p_i} \enspace.$$
\end{definition}

\begin{definition}
	A mechanism $\mathcal{M}$ is {\em $\epsilon$-Bayesian Incentive Compatible
	($\epsilon$-BIC)} if for any agent $i$ and any two valuations
	$v_i, \widetilde{v}_i \in V_i$,
	$$\e{\bm{v}_{-i}, (\bm{S}, \bm{p}) \sim 
	\mathcal{M}(v_i, \bm{v}_{-i})}{v_i(S_i) - p_i} \geq 
	\e{\bm{v}_{-i}, (\bm{S}, \bm{p}) \sim 
	\mathcal{M}(\widetilde{v}_i, \bm{v}_{-i})}{v_i(S_i) - p_i} 
	- \epsilon \enspace.$$
\end{definition}

Other than the above constraints of incentive compatibility, the mechanism 
shall also guarantee that the agents always get non-negative utility.
Otherwise, the agents may choose not to participate in the 
mechanism. This is known as the {\em individual rationality} constraint.

\begin{definition}
	A mechanism $\mathcal{M}$ is {\em individually rational (IR)}
	if for any realization $\bm{v}$ of agent valuations, and any allocation
	$\bm{S}$ and prices $\bm{p}$ by the mechanism, we always have that 
	$v_i(S_i) - p_i \geq 0$ for all agent $i$,
\end{definition}

\section{Characterization of BIC mechanisms}
\label{sec:charbic}

In this section, we will introduce a 
non-trivial characterization of BIC multi-parameter mechanisms via a novel
connection between BIC mechanisms and envy-free prices. This characterization
inspires our reduction in the next section. 

\subsection{Fractional assignment problem.}

We will first introduce the fractional assignment problems,
which will play a critical role in the results of this paper,
and a useful lemma about envy-free prices in fractional assignment problems.

In order to distinguish the notations for fractional assignment problems and 
those for the mechanism design problems, we will use superscripts instead
of subscripts to specify different entries of a vector for the fractional
assignment problems. For instance, we will use
$x^s$ to denote the $s^{th}$ entry of a vector $\bm{x}$.

Let us consider a market with $\ell$ buyers and $m$ infinitely divisible 
products. 
Each buyer $s$ has a non-negative demand $\alpha^s$, which denotes the maximal 
amount of products the buyer will buy. Each product $t$ has a non-negative 
supply $\beta^t$, which denotes the available amount of this product in the 
market. For each buyer $s$ and each product $t$, we let $w^{st}$ denote the
non-negative value of buyer $s$ of product $t$. 

The goal is to set prices for the products and to assign the products to the 
buyers subject to the demand and supply constraints. Thus, a solution 
$(\bm{x}, \bm{p})$ to the fractional assignment problem 
consists of a collection of prices $\bm{p} = (p^1, \dots, p^\ell)$ 
and a feasible allocation $\bm{x} = \{x^{st}\}_{1 \leq s \leq \ell, 1 \leq t 
\leq m}$ in the polytope:
$$\left\{\bm{x} : \forall s, \sum_{t = 1}^m x^{st} \leq \alpha^s; \,
\forall t, \sum_{s = 1}^\ell x^{st} \leq \beta^t; \,
\bm{x} \geq 0\right\} \enspace,$$
where $x^{st}$ denotes the amount of product $t$ that is 
assigned to buyer $s$.

\begin{definition}
	A solution $(\bm{x}, \bm{p})$ is {\em envy-free} if for any $x^{st} > 0$, 
	then $t$ is a product that maximizes the quasi-linear utility of agent 
	$s$, and the utility for agent $s$ is non-negative. That is, 
	\begin{equation}
		\label{eq:envyfree}
		\forall s, t : x^{st} > 0 \Rightarrow 
		w^{st} - p^t = \max_k \{w^{sk} - p^k\} \geq 0 \enspace.
	\end{equation}
\end{definition}

\begin{definition}
	An allocation $\bm{x}$ is {\em market-clearing}
	if all demand constraints and supply constraints hold with equality
	That is, 
	$$\forall 1 \leq s \leq \ell: \sum_{t=1}^m x^{st} = \alpha^s ~,~
	\forall 1 \leq t \leq m: \sum_{s=1}^\ell x^{st} = \beta^t \enspace.$$
\end{definition}

The social welfare maximization problem for a fractional assignment problem
is characterized by the following linear program (P) and its dual (D).
\begin{align*}
	\text{{\bf (P)} Maximize} & ~ ~ \Sigma_{s = 1}^\ell 
	\Sigma_{t = 1}^m x^{st} w^{st} & & \text{s.t.} &
	\text{{\bf (D)} Minimize} & ~ ~ \Sigma_{s = 1}^\ell \alpha^s u^s + 
	\Sigma_{t = 1}^m \beta^t p^t & & \text{s.t.} \\
	\Sigma_{t = 1}^m x^{st} & \leq \alpha^s & & \forall s & 
	u^s + p^t & \geq w^{st} & & \forall s, t \\
	\Sigma_{s = 1}^\ell x^{st} & \leq \beta^t & & \forall t &
	u^s & \geq 0 & & \forall s \\
	x^{st} & \geq 0 & & \forall s, t &
	p^t & \geq 0 & & \forall t
\end{align*}

\begin{align*}
\end{align*}

We will introduce two useful lemmas about the connection between 
envy-free prices and social welfare maximization for fractional
assignment problems. These lemmas were known in different forms in the
economics literature \cite{gul1999walrasian}.

\begin{lemma}
	\label{lemma:assignment1}
	If there exist envy-free prices $\bm{p}$ for a market-clearing
	allocation $\bm{x}$, then $\bm{x}$ maximizes the social welfare,
	that is, $\bm{x} \cdot \bm{w} = \max_{\bm{z}} \bm{z} \cdot \bm{w}$.
\end{lemma}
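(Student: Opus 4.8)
The plan is to use linear programming duality, specifically complementary slackness between the primal (P) and dual (D) shown above. The key observation is that envy-free prices $\bm{p}$ together with a market-clearing allocation $\bm{x}$ can be completed into a pair of primal and dual optimal solutions, so optimality of $\bm{x}$ follows from weak duality.

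First I would construct a dual feasible solution from the envy-free prices. Given the prices $\bm{p} = (p^1, \dots, p^m)$, define for each buyer $s$ the quantity $u^s = \max_k \{w^{sk} - p^k\}$; note $u^s \geq 0$ by the non-negativity part of the envy-free condition \eqref{eq:envyfree} (or $u^s = \max(0, \max_k\{w^{sk}-p^k\})$ if some buyer has zero demand, but since the allocation is market-clearing with $\alpha^s$ possibly positive, the definition above suffices). By construction $u^s + p^t \geq w^{st}$ for all $s,t$, and $p^t \geq 0$ by assumption, so $(\bm{u}, \bm{p})$ is feasible for (D). Meanwhile $\bm{x}$ is feasible for (P) since it lies in the stated polytope.

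Next I would verify complementary slackness for this primal-dual pair. There are two families of conditions. For the primal variables: whenever $x^{st} > 0$, the envy-free condition gives $w^{st} - p^t = \max_k\{w^{sk}-p^k\} = u^s$, i.e. $u^s + p^t = w^{st}$, so the corresponding dual constraint is tight. For the dual variables: whenever $u^s > 0$ we need $\sum_t x^{st} = \alpha^s$, and whenever $p^t > 0$ we need $\sum_s x^{st} = \beta^t$ — both of these hold for \emph{every} $s$ and $t$ because $\bm{x}$ is market-clearing, so these conditions are satisfied trivially. Hence all complementary slackness conditions hold, which by the LP duality theorem means $\bm{x}$ is optimal for (P) and $(\bm{u},\bm{p})$ is optimal for (D). In particular $\bm{x} \cdot \bm{w} = \max_{\bm{z}} \bm{z}\cdot\bm{w}$, as claimed.

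I do not anticipate a serious obstacle here; the argument is essentially a packaging of standard LP duality. The only point requiring minor care is the definition of $u^s$ and checking $u^s \ge 0$ — this is exactly where the "$\geq 0$" clause in the definition of envy-freeness is used, and one should make sure the market-clearing hypothesis is what lets the dual complementary slackness conditions go through without any further assumption (e.g. without needing $u^s>0$ or $p^t>0$ to fail anywhere). If one prefers to avoid invoking complementary slackness as a black box, the same conclusion follows directly: for any feasible $\bm{z}$, $\sum_{s,t} z^{st} w^{st} \le \sum_{s,t} z^{st}(u^s + p^t) \le \sum_s \alpha^s u^s + \sum_t \beta^t p^t = \sum_{s,t} x^{st}(u^s+p^t) = \sum_{s,t} x^{st} w^{st}$, where the last two equalities use market-clearing and the tightness of dual constraints on the support of $\bm{x}$ respectively.
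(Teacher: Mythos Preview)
Your proposal is correct and follows essentially the same approach as the paper: construct the dual solution $u^s = \max_t\{w^{st}-p^t\}$ from the envy-free prices, and then use LP duality together with market-clearing to conclude optimality of $\bm{x}$. The paper phrases the final step as a direct computation showing the primal and dual objective values coincide (exactly the chain of inequalities you wrote at the end), rather than invoking complementary slackness by name, but the content is identical.
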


\begin{proof}
	Suppose there exist envy-free prices $\bm{p}$ for 
	an allocation $\bm{x}$. Let $u^s = \max_t \left\{w^{st}-p^t\right\}$. 
	We have that $u^s + p^t \geq w^{st}$ for all $s, t$. So 
	$(\bm{u}, \bm{p})$ is a feasible solution for the dual LP.

	Moreover, by definition of envy-freeness, we have
	$$\forall s, t : x^{st} > 0 \Rightarrow u^s = w^{st} - p^t \enspace.$$
	
	Therefore, we get that
	$$\sum_{s=1}^\ell \sum_{t=1}^m x^{st} w^{st} = 
	\sum_{s=1}^\ell \sum_{t=1}^m x^{st} (u^s + p^t) 
	= \sum_{s=1}^\ell \alpha^s u^s + \sum_t \beta^t p^t \enspace.$$

	The last equality holds because $\bm{x}$ is market clearing.
	Notice that $\bm{x}$ is a feasible solution to the primal LP. 
	By duality theorem, we get that the allocation $\bm{x}$ maximizes 
	the social welfare for the fractional assignment problem.
\end{proof}
 
\begin{lemma}
	\label{lemma:assignment2}
	If an allocation $\bm{x}$ maximizes the social welfare, then there exist
	envy-free prices $\bm{p}$ for the fractional assignment problem.
\end{lemma}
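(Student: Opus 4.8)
The plan is to establish the converse of Lemma~\ref{lemma:assignment1} by linear programming duality. Suppose $\bm{x}$ maximizes the social welfare, i.e.\ $\bm{x}$ is an optimal solution to (P). By strong duality, the dual (D) has an optimal solution $(\bm{u}, \bm{p})$ with the same objective value, and together $\bm{x}$ and $(\bm{u},\bm{p})$ satisfy complementary slackness. I would then argue that these dual prices $\bm{p}$ are exactly the envy-free prices we want, with the $u^s$ playing the role of the optimal utilities $\max_k\{w^{sk}-p^k\}$.

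The key steps, in order, are as follows. First, invoke strong duality to obtain $(\bm{u},\bm{p})$ optimal for (D); note $p^t \geq 0$ for all $t$ directly from dual feasibility, which is one half of what the envy-free definition needs. Second, use the dual constraint $u^s + p^t \geq w^{st}$ to conclude $u^s \geq \max_t\{w^{st}-p^t\}$ for every $s$. Third, use complementary slackness on the \emph{primal} variables: whenever $x^{st} > 0$, the corresponding dual constraint is tight, $u^s + p^t = w^{st}$, i.e.\ $w^{st} - p^t = u^s$; combined with the inequality from the second step this forces $u^s = \max_t\{w^{st}-p^t\}$ and hence $w^{st} - p^t = \max_k\{w^{sk}-p^k\}$ exactly when $x^{st}>0$, which is the maximization part of \eqref{eq:envyfree}. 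Fourth, handle the nonnegativity clause: I need $w^{st} - p^t = u^s \geq 0$ whenever $x^{st} > 0$, which is immediate since $u^s \geq 0$ by dual feasibility. That completes the verification that $(\bm{x},\bm{p})$ is envy-free.

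One subtlety worth flagging: the envy-free definition as stated is vacuous for buyers $s$ with $x^{st}=0$ for all $t$ (no constraint is imposed on such a buyer), so I do not need to worry about whether their utility is nonnegative; but in any case the dual solution supplies $u^s \geq \max_t\{w^{st}-p^t\}$ for all $s$, so the prices are in fact envy-free in the stronger sense that no buyer strictly prefers to buy at a negative utility. I would phrase the argument to make clear that the single price vector $\bm{p}$ works simultaneously for all buyers, since it comes from one dual optimal solution.

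I do not expect a genuine obstacle here — the proof is essentially the standard Walrasian-equilibrium/LP-duality argument, and the main thing to get right is the bookkeeping of complementary slackness (which constraints are tight on which side). The only mild care needed is to make sure the market is assumed nonempty / feasible so that (P) has an optimum and strong duality applies; since (P) is a packing LP with the all-zero allocation feasible and a bounded objective (the feasible polytope is bounded by the demand constraints $\sum_t x^{st} \leq \alpha^s$), strong duality holds without extra hypotheses.
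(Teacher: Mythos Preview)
Your proposal is correct and follows essentially the same approach as the paper: take an optimal dual solution $(\bm{u},\bm{p})$, use complementary slackness to conclude that $x^{st}>0$ forces $u^s+p^t=w^{st}$, and combine with dual feasibility $u^s+p^k\geq w^{sk}$ and $u^s\geq 0$ to verify the envy-free condition \eqref{eq:envyfree}. The paper's proof is just a more compressed version of what you wrote.
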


\begin{proof}
	Suppose the allocation $\bm{x}$ maximizes the social
	welfare. Let $(\bm{u}, \bm{p})$ be an optimal solution to the dual
	LP. By complementary slackness we get that $x^{st} > 0$ only if the
	corresponding dual constraint is tight, that is, $u^s + p^t = w^{st}$.
	Therefore, $x^{st} > 0$ implies that $w^{st} - p^t = u^s \geq 
	w^{sk} - p^k$ for all $k$. Thus $\bm{p}$ is a collection of envy-free 
	prices for the allocation $\bm{x}$ in this fractional assignment problem.
\end{proof}

Note that the above proof also gives a poly-time algorithm for
finding the welfare maximizing allocation $\bm{x}$ and the corresponding
envy-free prices $\bm{p}$ by solving the primal and dual LPs.
Moreover, we also get that the envy-free prices $\bm{p}$ satisfy a weak 
uniqueness in the sense that it must be part of an optimal solution for the 
dual LP.

\begin{corollary}
	There exists a poly-time algorithm that computes 
	the welfare-maximizing market-clearing allocation and 
	the envy-free prices. 
\end{corollary}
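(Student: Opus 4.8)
The plan is to realize the corollary as a thin algorithmic wrapper around Lemmas~\ref{lemma:assignment1} and~\ref{lemma:assignment2}. First I would solve the primal linear program (P) to obtain an allocation $\bm{x}$ maximizing $\bm{x}\cdot\bm{w}$, and solve the dual (D) to obtain an optimal pair $(\bm{u},\bm{p})$; both programs have only $O(\ell m)$ variables and $O(\ell m)$ constraints with polynomially bounded data, so a generic polynomial-time LP solver (ellipsoid or interior point) produces both in polynomial time. Exactly as in the proof of Lemma~\ref{lemma:assignment2}, complementary slackness between $\bm{x}$ and $(\bm{u},\bm{p})$ forces $x^{st}>0 \Rightarrow u^s+p^t=w^{st}$, and since $u^s+p^t\ge w^{sk}$ for every $k$ and $u^s\ge 0$, the prices $\bm{p}$ are envy-free for $\bm{x}$. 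Thus welfare maximization plus envy-freeness is immediate; the only thing the corollary adds over the remark following Lemma~\ref{lemma:assignment2} is that the allocation can simultaneously be taken \emph{market-clearing}.

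The subtle point is that an arbitrary optimum of (P) need not meet the demand and supply constraints with equality (indeed that is impossible unless $\sum_s\alpha^s=\sum_t\beta^t$). To handle this I would, if the instance does not already contain them, augment it with one \emph{null product} $\bot$ valued $0$ by every buyer and one \emph{null buyer} $\top$ valuing every product at $0$, choosing $\beta^\bot$ and $\alpha^\top$ both large (of order $\sum_s\alpha^s+\sum_t\beta^t$) and so that total demand equals total supply in the augmented instance. Then I would verify two claims: (i) every welfare-maximizing allocation of the original instance extends to a welfare-maximizing \emph{and} market-clearing allocation of the augmented instance, by routing each buyer's unmet demand into $\bot$, each product's unused supply to $\top$, and the remaining mass as $x^{\top\bot}$ --- none of which changes the objective; and (ii) the augmented dual admits an optimum with $u^\top=p^\bot=0$ (both can be zeroed cheaply since $\top$ has huge demand, $\bot$ has huge supply, and all relevant $w$-values are $0$), so restricting an optimal augmented dual to the real products yields exactly a set of envy-free prices for them, with the real allocation and its welfare unchanged. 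Equivalently, once (i) guarantees feasibility, one may simply add the market-clearing equalities to (P) as explicit constraints and solve the resulting LP, whose optimum still equals the maximum welfare.

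Putting these together: run the polynomial-time LP solver on the augmented primal with the equality constraints (or on (P) followed by the simple combinatorial rerouting step) and on the augmented dual, then read off the real-buyer/real-product part of the allocation and the real-product prices; the result is a welfare-maximizing market-clearing allocation together with envy-free prices, computed in polynomial time. The main obstacle, and the only place a genuine argument is needed, is claims (i)--(ii): confirming that a simultaneously welfare-maximizing and market-clearing allocation exists and that the dummy padding neither alters the optimal welfare nor distorts the envy-free prices on the real products. Everything else --- polynomial-time solvability of the two LPs and the complementary-slackness extraction of $\bm{p}$ --- is routine and already carried out in Lemmas~\ref{lemma:assignment1} and~\ref{lemma:assignment2}.
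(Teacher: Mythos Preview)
Your core approach---solve the primal LP (P) for a welfare-maximizing $\bm{x}$, solve the dual (D) for $(\bm{u},\bm{p})$, and read off envy-free prices via complementary slackness---is exactly what the paper does; the corollary in the paper is stated without proof and is justified only by the one-sentence remark preceding it (``the above proof also gives a poly-time algorithm \dots\ by solving the primal and dual LPs'').

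Where you differ is in taking the word \emph{market-clearing} seriously. The paper simply asserts the corollary and never argues that the LP optimum can be taken to satisfy all demand and supply constraints with equality; you correctly observe that this is not automatic and supply a null-buyer/null-product augmentation to force it. Your argument is sound, but it is heavier than what the paper's intended application needs: in the induced assignment problems the demands and supplies are the \emph{same} vector $(f_i(1),\dots,f_i(\ell))$, so total demand equals total supply, and since all $w^{st}\ge 0$ any optimum of (P) can be made market-clearing by the trivial step of pushing slack mass into any buyer--product pair that has room on both sides (this never decreases the objective). So your padding construction buys generality the paper does not require, while the paper's implicit argument relies on the balanced, nonnegative-value structure of its instances without saying so.
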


\subsection{Characterizing BIC via envy-free prices.}

We first introduce some notations that will simplify our discussion.
Given a mechanism $\mathcal{M}$ for a multi-parameter mechanism design problem
$\langle \bm{I}, \bm{J}, \bm{V}, \bm{F} \rangle$, we will consider the
expected values and expected prices for each agent when it choose a specific 
strategy (each strategy corresponds to reporting a specific valuation).

Assuming the other agents report their valuations truthfully, 
agent $i$'s expected value of the service it gets, when the 
genuine valuation is $v^s_i$ and the reported valuation is $v^t_i$, is
$$w^{st}_i = \e{\bm{v}_{-i},(\bm{S},\bm{p})\sim\mathcal{M}(v^t_i,\bm{v}_{-i})}{
v^s_i(S_i)} \enspace.$$

Similarly, we let $p^i_t$ denote the expected price the mechanism would charge
to agent $i$ if its reported valuation is $v^t_i$, that is, 
$$p^t_i = \e{\bm{v}_{-i},(\bm{S},\bm{p})\sim\mathcal{M}(v^t_i,\bm{v}_{-i})}{p_i} 
\enspace.$$

By the definition of BIC and IR, the mechanism $\mathcal{M}$ is BIC and IR if and 
only if for any $1 \leq i \leq n$ and $1 \leq s \leq \ell$,
\begin{equation}
	w^{ss}_i - p^s_i = \max_t \{w^{st}_i - p^t_i \} \geq 0
	\enspace. \label{eq:bic}
\end{equation}

The above equation \eqref{eq:bic} is similar to 
equation \eqref{eq:envyfree} in the definition of envy-freeness in fractional
assignment problem. In fact,
the key observation is that the above BIC condition is equivalent to the 
envy-free condition for a set of properly chosen fractional assignment 
problems.

\paragraph{Induced assignment problems.}

For each agent $i$, we will consider the following {\em induced assignment 
problem}. 
We consider $\ell$ virtual buyers with demands $f_i(1), \dots, f_i(\ell)$ 
respectively, and $\ell$ virtual products with supplies $f_i(1), \dots, 
f_i(\ell)$ respectively. For each virtual buyer $s$ and each virtual product
$t$, let virtual buyer $s$ has value $w^{st}_i$ on virtual product $t$.
We will refer to this fractional assignment problem the {\em induced
assignment problem} of agent $i$.

\medskip

Let us consider the {\em identity allocation} $\bm{x}_i$ defined as follows:
$$x^{st}_i = \left\{\begin{aligned}
	& f_i(s) & & \text{, if } s = t \enspace,\\
	& 0 & & \text{, otherwise.}
\end{aligned}\right.$$ 

We can easily verify that 
a collection of prices $\bm{p}_i = (p^1_i, \dots, p^\ell_i)$ satisfies
constraint \eqref{eq:bic} if and only if $\bm{p}_i$ satisfies the envy-free
condition \eqref{eq:envyfree} of the induced assignment problem of agent $i$
with respect to the above identity allocation.
Hence, we have the following connection between BIC mechanism and the
envy-free prices of the induced assignment problems.

\begin{lemma}[Characterization Lemma~\cite{rochet1987necessary}]
	\label{lemma:characterization}
	A mechanism $\mathcal{M}$ is BIC if and only if in the induced assignment 
	problem of each agent $i$ the identity 
	allocation $\bm{x}_i = \{x^{st}_i\}_{1 \leq s,t \leq \ell}$ 
	maximizes the social welfare, and $\bm{p}_i = (p^1_i, \dots, p^\ell_i)$ 
	are chosen to be the corresponding envy-free prices.
\end{lemma}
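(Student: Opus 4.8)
The plan is to establish the two directions of the ``if and only if'' by reducing everything to the envy-freeness condition \eqref{eq:envyfree} applied to the identity allocation $\bm{x}_i$ in the induced assignment problem, and then invoke Lemmas~\ref{lemma:assignment1} and~\ref{lemma:assignment2}. First I would recall that, by the discussion preceding the statement, a mechanism $\mathcal{M}$ is BIC and IR precisely when condition \eqref{eq:bic} holds for every agent $i$ and every index $s$, namely $w^{ss}_i - p^s_i = \max_t\{w^{st}_i - p^t_i\} \geq 0$. So the whole proof hinges on the claim that, for a fixed agent $i$, condition \eqref{eq:bic} for all $s$ is equivalent to: the identity allocation $\bm{x}_i$ maximizes social welfare in the induced assignment problem \emph{and} $\bm{p}_i$ is a choice of corresponding envy-free prices.

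For the forward direction, suppose $\mathcal{M}$ is BIC (and IR). Then \eqref{eq:bic} holds, which says exactly that whenever $x^{st}_i > 0$ — which for the identity allocation means $s = t$ — we have $w^{st}_i - p^t_i = w^{ss}_i - p^s_i = \max_k\{w^{sk}_i - p^k_i\} \geq 0$. This is literally the envy-free condition \eqref{eq:envyfree} for the pair $(\bm{x}_i, \bm{p}_i)$. Moreover the identity allocation is market-clearing for the induced assignment problem: each virtual buyer $s$ receives total amount $f_i(s)$, matching its demand, and each virtual product $t$ is fully allocated in amount $f_i(t)$, matching its supply. Hence by Lemma~\ref{lemma:assignment1} the identity allocation maximizes social welfare, and $\bm{p}_i$ are envy-free prices for it, giving the desired conclusion.

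For the reverse direction, suppose that for each agent $i$ the identity allocation $\bm{x}_i$ maximizes the social welfare of the induced assignment problem and $\bm{p}_i$ are corresponding envy-free prices. Then by definition of envy-freeness, for every $s$ (using $x^{ss}_i = f_i(s) > 0$ whenever $f_i(s) > 0$, and noting that indices with $f_i(s) = 0$ can be discarded or handled trivially) we get $w^{ss}_i - p^s_i = \max_k\{w^{sk}_i - p^k_i\} \geq 0$, which is exactly \eqref{eq:bic}. By the characterization of BIC and IR in terms of \eqref{eq:bic}, this shows $\mathcal{M}$ is BIC (indeed BIC and IR). I would also remark that the existence of such envy-free prices whenever $\bm{x}_i$ is welfare-maximizing is guaranteed by Lemma~\ref{lemma:assignment2}, so the characterization is non-vacuous.

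The one subtlety — and the only real obstacle — is the degenerate case $f_i(s) = 0$ for some $s$, where the identity allocation puts zero mass on the diagonal entry $(s,s)$, so envy-freeness imposes no constraint there and \eqref{eq:bic} for that $s$ does not follow from envy-freeness alone. The clean fix is to observe that such types occur with probability zero and therefore do not affect the BIC condition in any operationally meaningful way (the agent is never of type $v^s_i$), so without loss of generality we may assume $f_i(s) > 0$ for all $s$ by restricting each support to types of positive probability; this is consistent with the paper's standing assumption on the support of $F_i$. With that caveat dispatched, the proof is essentially a translation between the two equivalent formulations of \eqref{eq:bic} and \eqref{eq:envyfree} together with the two assignment lemmas.
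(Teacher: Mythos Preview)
Your proof is correct and follows essentially the same approach as the paper: the paper does not give a separate formal proof of this lemma but rather derives it from the observation (stated just before the lemma) that condition \eqref{eq:bic} is equivalent to the envy-free condition \eqref{eq:envyfree} for the identity allocation, together with Lemmas~\ref{lemma:assignment1} and~\ref{lemma:assignment2}; you have simply made this argument explicit, including the market-clearing observation and the handling of the degenerate $f_i(s)=0$ case.
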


\paragraph{Comparing with Myerson's characterization.}

Suppose the problem falls into the single-parameter domain. Each
valuation $v^s_i$ is represented by a single non-negative real number. With
a little abuse of notation, we let $v^s_i$ denote this value. Without loss
of generality, we assume that $v^1_i > \dots > v^\ell_i$. We let
$y^t_i$ denote the probability that agent $i$ would be served if the reported
value was $v^t_i$. The values $\bm{w}_i$ in the fractional assignment
problems of agent $i$ are $w^{st}_i = v^s_i y^t_i$ for $1 \leq s, t \leq \ell$.
Myerson's famous characterization \cite{myerson1981optimal}
of truthfulness in single-parameter domain
implied that the mechanism is BIC if and only if $y^1_i \geq
\dots \geq y^\ell_i$. Indeed, due to rearrangement inequality, the identity 
allocation $\bm{x}_i$ maximizes the social welfare if and only if 
$y^1_i \geq \dots \geq y^\ell_i$. Thus, the characterization lemma implies
Myerson's characterization in the single-parameter domain.

\section{Reduction for social welfare}
\label{sec:socialwelfare}

Lemma \ref{lemma:characterization} suggests an interesting connection between 
BIC and envy-free prices for the induced assignment problems. 
Hence, given an algorithm $\mathcal{A}$, we will manipulate the allocation by 
$\mathcal{A}$ based on this connection in order to make 
it satisfy the condition in Lemma \ref{lemma:characterization}.

\subsection{Main ideas.}

Let us first briefly convey two key ideas in the construction of the
welfare-preserving reduction.

The first idea is to decouple the reported agent valuations and the 
input agent valuations for algorithm $\mathcal{A}$. More
concretely, we will introduce $n$ intermediate algorithm $\mathcal{B}_1, \dots,
\mathcal{B}_n$. Each $\mathcal{B}_i$ will take the reported valuation $v'_i$ as
input, then output a valuation $\widetilde{v}_i$ that may or may not equals 
$v'_i$. Then, we will use algorithm $\mathcal{A}$ to compute the allocation $\bm{S}$
for agent valuations $\widetilde{v}_1, \dots, \widetilde{v}_n$, 
and allocate services according to $\bm{S}$. 

If we revisit the values 
$\widetilde{\bm{w}}_i$ in the induced assignment problem of agent $i$ after
this manipulation, we will get that for any $1 \leq s, t \leq \ell$,
$$\widetilde{w}^{st}_i = \e{\bm{v}_{-i}, \widetilde{\bm{v}} \sim 
\mathcal{B}(v^t_i, \bm{v}_{-i}), \bm{S} \sim \mathcal{A}(\widetilde{\bm{v}})}
{v^s_i(S_i)} \enspace.$$

\begin{figure}
	\centering
	\includegraphics[width=.45\textwidth]{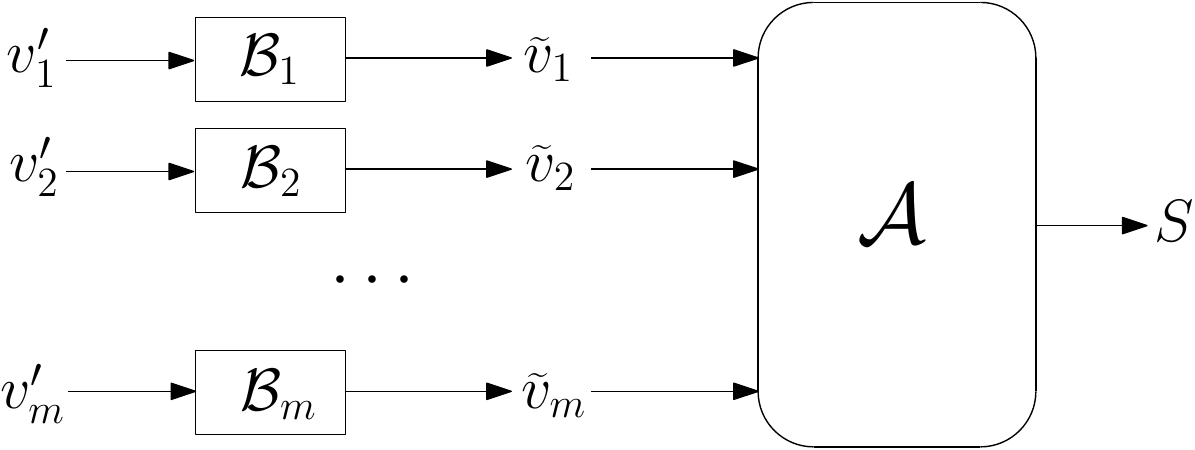}
	\caption{High-level picture of the reduction for social welfare.
	$\mathcal{B}_i$'s are intermediate algorithms for manipulating the 
	input of algorithm $\mathcal{A}$.
	$\widetilde{v}_i$'s are the reported valuations. $v'_i$'s are the 
	manipulated input valuations for algorithm $\mathcal{A}$. $\bm{S}$ is
	the final allocation.}
	\label{fig:reduction}
\end{figure}

By Lemma \ref{lemma:characterization}, we need to choose $\mathcal{B}_i$'s carefully, 
so that the identity allocations in the manipulated assignment problems are
welfare-maximizing allocations. However, from the above equation we can see 
that by using $\mathcal{B}_i$ to manipulate the $i^{th}$ valuation, we may change
not only the structure of the induced assignment problem of agent $i$, but the
structure of the induced assignment problems of other agents as well. Hence,
we need to handle such correlation among the induced assignment problems when
we choose intermediate algorithms $\mathcal{B}_1, \dots, \mathcal{B}_n$.

The idea that handles this correlation is to impose an extra constraint on
each intermediate algorithm $\mathcal{B}_i$: if the input valuation $v'_i$ is 
drawn from the distribution $F_i$, then the output valuation 
$\widetilde{v}_i$ also follows the same distribution, that is, 
for all $1 \leq i \leq n$ and $1 \leq t \leq \ell$,
\begin{equation}
	\label{eq:interconstraint}
	\pr{v'_i \sim F_i, \widetilde{v}_i \sim \mathcal{B}_i(v'_i)}{\widetilde{v}_i =
	v^t_i} = f_i(t) \enspace.
\end{equation}
	
With this extra constraint, the values $\widetilde{\bm{w}}_i$ after the 
manipulation in the induced assignment problem of agent $i$ becomes
\begin{eqnarray*}
	\widetilde{w}^{st}_i & = & \e{\bm{v}_{-i} \sim F_{-i}, 
	\widetilde{\bm{v}}\sim\mathcal{B}(v^t_i, \bm{v}_{-i}), \bm{S} \sim 
	\mathcal{A}(\widetilde{\bm{v}})}{v^s_i(S_i)} \\
	& = & \e{\widetilde{\bm{v}}_{-i} \sim F_{-i}, \widetilde{v}_i \sim 
	\mathcal{B}_i(v^t_i), \bm{S} \sim \mathcal{A}(\widetilde{\bm{v}})}{v^s_i(S_i)} \\
	& = & \e{\bm{v}_{-i} \sim F_{-i}, \widetilde{v}_i \sim 
	\mathcal{B}_i(v^t_i), \bm{S} \sim \mathcal{A}(\widetilde{v}_i, \bm{v}_{-i})}
	{v^s_i(S_i)} \enspace.
\end{eqnarray*}

Thus, from the Bayesian viewpoint of agent $i$, the intermediate algorithms
$\mathcal{B}_{-i}$ of other agents are transparent. This property enables us to 
manipulate the structure of each assignment problem separately.

\subsection{Black-box reduction.}

Given an algorithm $\mathcal{A}$, the black-box reduction for social welfare will 
convert algorithm $\mathcal{A}$ into the following mechanism $\mathcal{M}_\mathcal{A}$:

\begin{enumerate}
	\item For each agent $i$, $1 \leq i \leq n$ 
		\hspace*{\fill} {\bf (Pre-computation)}
		\begin{enumerate}
			\item Estimate the values
				$\bm{w}_i = \{w^{st}_i\}_{1 \leq s, t \leq \ell}$ 
				of the induced assignment problem of $i$ with respect to 
				algorithm $\mathcal{A}$. Let 
				$\hat{\bm{w}}_i = \{\hat{w}^{st}_i\}_{1 \leq s, t\leq \ell}$
				denote the estimated values.
			\item Find the social welfare maximizing allocation 
				$\bm{x}_i = \{x^{st}_i\}_{1 \leq s, t \leq \ell}$
				and the corresponding envy-free prices 
				$\bm{p}_i = (p^1_i, \dots, p^\ell_i)$ for the induced 
				assignment problem of agent $i$ with estimated values.
		\end{enumerate}
	\item Manipulate the valuations with intermediate algorithms
		$\mathcal{B}_i$, $1 \leq i \leq n$, as follows:
		\hspace*{\fill} {\bf (Decoupling)}
		\begin{enumerate}
			\item[] Suppose the reported valuation of agent $i$ is
				$v'_i = v^s_i$, $1 \leq i \leq n$.
				Let $\widetilde{v}_i = \mathcal{B}_i(v'_i) = v^t_i$ 
				with probability $x^{st}_i / f_i(s)$ for $1 \leq t \leq \ell.$
		\end{enumerate}
	\item Allocate services according to $\mathcal{A}(\widetilde{\bm{v}})$. 
		\hspace*{\fill} {\bf (Allocation)}
		\begin{enumerate}
			\item Let $\bm{S} = (S_1, \dots, S_n)$ denote the allocation by
				algorithm $\mathcal{A}$ with input $\widetilde{\bm{v}}$.
			\item For each agent $i$, suppose the reported valuation is 
				$v'_i = v^s_i$ and the manipulated valuation is 
				$\widetilde{v}_i = v^t_i$, charge agent $i$ with price 
				$p^t_i v^s_i(S_i) / \hat{w}^{st}_i$.
		\end{enumerate}
\end{enumerate}

The following theorem
states that this reduction produces BIC while preserving the performance
with respect to social welfare.

\begin{theorem}
	\label{thm:welfarereduction}
	Suppose $\mathcal{A}$ is an algorithm for a multi-parameter mechanism design
	problem $\langle \bm{I}, \bm{J}, \bm{V}, \bm{F} \rangle$.
	\begin{enumerate}
		\item If the estimated values $\hat{\bm{w}}_i$ are 
			accurate, then mechanism $\mathcal{M}_\mathcal{A}$ is BIC, IR, and 
			guarantees at least $SW^\mathcal{A}$ of social welfare. 
		\item If the estimated values $\hat{\bm{w}}_i$ satisfy that for any 
			$1 \leq s, t \leq \ell$, 
			$\hat{w}^{st}_i \in [(1 - \epsilon) w^{st}_i, 
			(1 + \epsilon) w^{st}_i]$, 
			then mechanism $\mathcal{M}_\mathcal{A}$ is $4 \epsilon v_{max}$-BIC, IR, 
			and guarantees at least $(1 - 2\epsilon) \cdot SW^\mathcal{A}$ of 
			social welfare.
		\item If the estimated values $\hat{\bm{w}}_i$ 
			satisfy that for any $1 \leq s, t \leq \ell$,
			$\hat{w}^{st}_i \in [w^{st}_i - \epsilon, 
			w^{st}_i + \epsilon]$, then mechanism $\mathcal{M}_\mathcal{A}$ is 
			$4 \epsilon$-BIC, IR, and guarantees at least 
			$SW^\mathcal{A} - 2 n \epsilon$ of social welfare.
	\end{enumerate}
\end{theorem}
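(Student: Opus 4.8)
The plan is to verify the three parts in sequence, using Lemma~\ref{lemma:characterization} as the backbone. First I would establish part~1 (the exact case). The crucial observation is that the intermediate algorithm $\mathcal{B}_i$ is built precisely so that, when $v'_i \sim F_i$, the output $\widetilde{v}_i$ is distributed as $\mathcal{B}_i(v'_i) = v^t_i$ with total probability $\sum_s f_i(s) \cdot x^{st}_i / f_i(s) = \sum_s x^{st}_i$, which equals $\beta^t = f_i(t)$ because $\bm{x}_i$ is market-clearing for the induced assignment problem. So constraint~\eqref{eq:interconstraint} holds, and by the argument already given in the ``Main ideas'' subsection the manipulated values decouple: $\widetilde{w}^{st}_i = \e{\bm{v}_{-i}\sim F_{-i},\,\widetilde{v}_i\sim\mathcal{B}_i(v^t_i),\,\bm{S}\sim\mathcal{A}(\widetilde{v}_i,\bm{v}_{-i})}{v^s_i(S_i)}$. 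Now I would compute this expectation explicitly: conditioning on $\mathcal{B}_i(v^t_i) = v^k_i$ (probability $x^{tk}_i / f_i(t)$), the inner expectation is exactly $w^{sk}_i$ by definition of the original induced problem, so $\widetilde{w}^{st}_i = \sum_k (x^{tk}_i / f_i(t)) \, w^{sk}_i$. In particular the new identity-allocation welfare is $\sum_s \widetilde{w}^{ss}_i = \sum_s \sum_k x^{sk}_i w^{sk}_i = \bm{x}_i \cdot \bm{w}_i$, which is the optimum of the original induced assignment LP by construction; one then checks that this value equals the optimum of the \emph{new} induced LP (values $\widetilde{\bm{w}}_i$) because any feasible new allocation pulls back to a feasible old allocation of no greater value. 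Hence the new identity allocation is welfare-maximizing, and Lemma~\ref{lemma:characterization} says there exist envy-free prices making the mechanism BIC and IR — and these are exactly the $\bm{p}_i$ computed in the pre-computation step, rescaled to realized prices $p^t_i v^s_i(S_i)/\hat w^{st}_i$ so that the \emph{expected} price charged in state $(s,t)$ is $p^t_i$ (using $\e{}{v^s_i(S_i) \mid \mathcal{B}_i(v^s_i)=v^t_i} = \hat w^{st}_i = w^{st}_i$ in the exact case). IR holds pointwise because $v^s_i(S_i)/w^{st}_i \ge 0$ and $w^{ss}_i - p^s_i \ge 0$. Finally, the social welfare of $\mathcal{M}_\mathcal{A}$ equals $\sum_i \sum_s f_i(s) \widetilde{w}^{ss}_i = \sum_i \bm{x}_i\cdot\bm{w}_i \ge \sum_i \sum_s f_i(s) w^{ss}_i = SW^\mathcal{A}$, since the identity allocation is feasible for each induced LP so its optimum is at least the identity value.

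Next I would handle parts~2 and~3 together, since they differ only in multiplicative versus additive error bookkeeping. Two things degrade. (i) The prices are now rescaled by $p^t_i v^s_i(S_i)/\hat w^{st}_i$ rather than $/w^{st}_i$, so the expected price in state $(s,t)$ becomes $p^t_i \cdot w^{st}_i / \hat w^{st}_i$, which is off from the ideal $p^t_i$ by a factor in $[1/(1+\epsilon), 1/(1-\epsilon)]$ (part~2) or by an additive $|p^t_i|\cdot\epsilon/\hat w^{st}_i$-type term (part~3); since prices are bounded by $v_{max}$ (envy-freeness plus IR force $0 \le p^t_i \le \max_s w^{ss}_i \le v_{max}$), the deviation in expected utility from mis-priced payments is $O(\epsilon v_{max})$ in part~2 and $O(\epsilon)$ in part~3. (ii) The allocation $\bm{x}_i$ and prices $\bm{p}_i$ were computed as the welfare-maximizer / envy-free prices of the \emph{estimated} problem $\hat{\bm{w}}_i$, not the true one, so the envy-free inequality $\hat w^{ss}_i - \hat p^s_i \ge \hat w^{st}_i - \hat p^t_i$ only implies the true-value inequality up to the accumulated estimation slack, again $O(\epsilon v_{max})$ or $O(\epsilon)$ per comparison. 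Adding the two sources gives the claimed $4\epsilon v_{max}$-BIC (resp. $4\epsilon$-BIC) — the constant $4$ is exactly what falls out of combining two two-sided $\epsilon$-errors. For the welfare guarantee: the realized value agent $i$ contributes is $v^s_i(S_i)$ whose expectation in state $(s,t)$ is $w^{st}_i \in [(1-\epsilon)\hat w^{st}_i,\dots]$ — wait, more directly, $SW^{\mathcal{M}_\mathcal{A}} = \sum_i \sum_t f_i(t)\, \widetilde w^{tt}_i = \sum_i \sum_{t,k} x^{tk}_i w^{tk}_i = \sum_i \bm{x}_i\cdot\bm{w}_i \ge (1-\epsilon)\sum_i \bm{x}_i\cdot\hat{\bm{w}}_i \ge (1-\epsilon)\sum_i \sum_s f_i(s)\hat w^{ss}_i \ge (1-\epsilon)^2 \sum_i\sum_s f_i(s) w^{ss}_i \ge (1-2\epsilon) SW^\mathcal{A}$ for part~2 (using that $\bm{x}_i$ beats the identity allocation on the \emph{estimated} problem), and the analogous additive chain $\sum_i \bm{x}_i\cdot\bm{w}_i \ge \sum_i\bm{x}_i\cdot\hat{\bm{w}}_i - \ell\epsilon \cdot(\text{number of nonzero }x) \ge \dots \ge SW^\mathcal{A} - 2n\epsilon$ for part~3, where the bound on the number of nonzero entries of a basic optimal $\bm{x}_i$ (at most $2\ell - 1$ by the transportation-polytope vertex structure) keeps the additive loss at $O(n\epsilon)$ after one absorbs constants into $\epsilon$.

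I expect the main obstacle to be part~(i) of the error analysis — tracking how inaccuracy in $\hat{\bm{w}}_i$ propagates simultaneously through (a) the choice of $\bm{x}_i$ as an optimal vertex, (b) the envy-free prices $\bm{p}_i$ derived from the dual, and (c) the payment rescaling — and bounding all of it by a clean multiple of $\epsilon v_{max}$. The subtlety is that $\bm{x}_i$ need no longer be market-clearing or welfare-maximizing for the \emph{true} induced problem, so constraint~\eqref{eq:interconstraint} might fail slightly, meaning the decoupling identity is only approximate and the $\mathcal{B}_{-i}$ of other agents are not quite transparent to agent $i$; I would need to argue that the marginal of $\widetilde{v}_i$ differs from $f_i$ only in total variation $O(\epsilon)$ (since $\sum_s x^{st}_i$ deviates from $f_i(t)$ only through the estimation error) and that this perturbs each $\widetilde w^{st}_i$ by at most $O(\epsilon v_{max})$. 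A secondary nuisance is ensuring the rescaled payments are well-defined and IR-respecting even when $\hat w^{st}_i$ is a small underestimate — one wants $\hat w^{st}_i > 0$ whenever $x^{st}_i > 0$, which holds because the estimate is within a $(1\pm\epsilon)$ (or $\pm\epsilon$, assuming the relevant true values exceed $\epsilon$) band of a positive quantity. Everything else is routine expectation-algebra of the kind sketched above.
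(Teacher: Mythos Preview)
Your Part~1 argument is essentially the paper's, but with the logical flow reversed and one step left as an assertion. The paper computes $\widetilde{w}^{rs}_i - \widetilde{p}^s_i = \sum_t (x^{st}_i/f_i(s))(w^{rt}_i - p^t_i)$ and directly verifies that this is $\le \max_k\{w^{rk}_i - p^k_i\}$ with equality at $r=s$ (by envy-freeness of $\bm{p}_i$ for $\bm{x}_i$), so $\widetilde{\bm{p}}_i$ is envy-free for the identity allocation; Lemma~\ref{lemma:assignment1} then gives welfare-maximality and Lemma~\ref{lemma:characterization} gives BIC. You instead prove welfare-maximality of the identity allocation via a pullback (correct), but then merely assert that the envy-free prices ``are exactly the $\bm{p}_i$ computed in the pre-computation step''---this is precisely the inequality you still need to check, and it is the same two-line computation the paper does. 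Your IR reasoning is also slightly off: what you need is $p^t_i \le \hat w^{st}_i$ whenever $x^{st}_i>0$, which comes from the non-negativity clause in the envy-freeness of $(\bm{x}_i,\bm{p}_i)$ for the \emph{estimated} problem, not from $w^{ss}_i - p^s_i \ge 0$.

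Your anticipated ``main obstacle'' in Parts~2--3 is a phantom. The feasible region of each induced assignment LP depends only on the demands/supplies $f_i(\cdot)$, which are known exactly; estimation error perturbs only the \emph{objective}. Hence the optimal $\bm{x}_i$ computed from $\hat{\bm{w}}_i$ is still exactly market-clearing, constraint~\eqref{eq:interconstraint} holds exactly, and the decoupling $\widetilde w^{st}_i = \sum_k (x^{tk}_i/f_i(t))\,w^{sk}_i$ is exact even with estimation error---no total-variation argument is needed. The only error propagation is through (b) the envy-free inequalities (which hold for $\hat{\bm{w}}_i$, hence for $\bm{w}_i$ up to $2\epsilon v_{max}$ or $2\epsilon$) and (c) the payment rescaling factor $w^{st}_i/\hat w^{st}_i$, which together give the factor~$4$. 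For the Part~3 welfare bound, you do not need the transportation-vertex sparsity: since $\sum_{s,t} x^{st}_i = \sum_s f_i(s) = 1$, one has $|\bm{x}_i\cdot(\bm{w}_i - \hat{\bm{w}}_i)| \le \epsilon$ directly, and two such losses give the $2n\epsilon$. Your route via ``at most $2\ell-1$ nonzero entries'' would yield only $O(n\ell\epsilon)$, which is not the stated bound.
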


Let us illustrate the proof of part 1. The proofs of the other two parts are 
tedious and simple calculations along the same line. We will omit these proofs
in this extended abstract.

\begin{proof}
	We consider the case when the estimated values $\hat{\bm{w}}_i$ 
	are accurate,
	that is, $\hat{w}^{st}_i = w^{st}_i$ for all $1 \leq i \leq n$ and $1 \leq s, t \leq \ell$.

	\paragraph{Individual rationality.}
	By our choice of envy-free prices, we have that 
	$p^t_i \leq w^{st}_i$ for all $1 \leq i \leq n$ and $1 \leq s, t \leq \ell$.
	Thus, we always guarantee
	$$p^t_i \, \frac{v^s_i(S_i)}{w^{st}_i} \leq v^s_i(S_i) \enspace.$$ 

	So the mechanism $\mathcal{M}_\mathcal{A}$ that we get from the reduction
	always provides non-negative utilities for the agents.
	Essentially the same proof also shows IR for part 2 and 3.	
	
	\paragraph{Bayesian incentive compatibility.}
	We will first show that the intermediate algorithms in the decoupling
	step of the reduction satisfy constraint \eqref{eq:interconstraint}.
	Let $\bm{x}_i$ denote the social welfare maximizing allocation
	that the reduction finds
	for the induced assignment problem of agent $i$ for $1 \leq i \leq n$.
	Note that these social welfare maximizing allocations are market-clearing. 
	We have that if the reported valuation $v'_i$ follows the distribution 
	$F_i$, then the distribution of the manipulated valuation 
	$\widetilde{v}_i$ satisfies that
	$$\pr{}{\widetilde{v}_i = v^t_i} = \sum_{s=1}^\ell \pr{}{v'_i = v^s_i} \,
	\pr{}{\widetilde{v}_i = v^t_i : v'_i = v^s_i} 
	= \sum_{s=1}^\ell f_i(s) \, \frac{x^{st}_i}{f_i(s)} 
	= \sum_{s=1}^\ell x^{st}_i = f_i(t) \enspace.$$

	Indeed, the intermediate algorithms satisfy constraint
	\eqref{eq:interconstraint}. Thus, for each $1 \leq i \leq n$ 
	the intermediate algorithm $\mathcal{B}_i$
	only changes the structure of induced assignment problem of agent $i$
	and leaves the induced assignment problems of other agents untouched.
	
	Next, we will verify that in each of the manipulated assignment 
	problem, the identity allocation maximizes the social welfare and the
	prices are the corresponding envy-free prices.

	For each agent $i$,
	we let $\widetilde{\bm{w}}_i = 
	\{\widetilde{w}^{st}_i\}_{1 \leq s, t \leq \ell}$
	and $\widetilde{\bm{p}}_i = 
	(\widetilde{p}^1_i, \dots, \widetilde{p}^\ell_i)$
	denote the values and the expected prices of the virtual products 
	respectively in
	the manipulated assignment problem of agent $i$. We have that for any 
	$1 \leq r, s \leq \ell$, 
	\begin{align*}
		\widetilde{w}^{rs}_i & = 
		\sum_{t=1}^\ell \pr{}{\widetilde{v}_i = v^t_i} \e{\bm{v}_{-i}, 
		\bm{S} \sim \mathcal{A}(v^t_i, \bm{v}_{-i})}{v^r_i(S_i)} 
		= \sum_{t=1}^\ell \frac{x^{st}_i}{f_i(s)} \, w^{rt}_i \enspace, \\
		\widetilde{p}^s_i 
		& = \sum_{t=1}^\ell \pr{}{\widetilde{v}_i = v^t_i} 
		\e{\bm{v}_{-i}, \bm{S} \sim \mathcal{A}(v^t_i, \bm{v}_{-i})}
		{p^t_i \frac{v^s_i(S_i)}{w^{rs}_i}} 
		= \sum_{t=1}^\ell \frac{x^{st}_i}{f_i(s)} \, p^t_i\enspace.
	\end{align*}

	Thus, in the manipulated assignment problem of agent $i$,
	the utility of the virtual buyer $r$ of the virtual product $s$,
	$1 \leq r, s \leq \ell$, is 
	$$\widetilde{w}^{rs}_i - \widetilde{p}^s_i = \sum_{t=1}^\ell
	\frac{x^{st}_i}{f_i(s)} \, (w^{rt}_i - p^t_i) 
	\leq \sum_{t=1}^\ell \frac{x^{st}_i}{f_i(s)} \, \max_k 
	\{w^{rk}_i - p^k_i\} = \max_k \{w^{rk}_i - p^k_i\} \enspace.$$

	Since $\bm{p}_i$ are chosen to be the envy-free prices, we have that
	$x^{rt}_i > 0$ only if $w^{rt}_i - p^t_i = \max_k \{w^{rk}_i - p^k_i\}$.
	Hence, when agent $i$ reports its valuation truthfully, that is, $r = s$,
	the above inequality holds with equality. 
	So the $\widetilde{p}_i$ are envy-free
	prices with respect to the identity allocation $\widetilde{\bm{x}}_i$
	of the manipulated assignment problem of agent $i$. By Lemma 
	\ref{lemma:assignment1}
	we know the allocation $\widetilde{\bm{x}}_i$ maximizes the social welfare.
	Thus, mechanism $\mathcal{M}_\mathcal{A}$ is BIC according to Lemma 
	\ref{lemma:characterization}.
	
	\paragraph{Social welfare.} The expected social welfare for this 
	mechanism is $\sum_{i=1}^n \sum_{s=1}^\ell \sum_{t=1}^\ell
	x^{st}_i w^{st}_i$. Since for any $1 \leq i \leq n$
	the allocation $\bm{x}_i$ maximizes the social welfare for the 
	induced assignment problem of agent $i$, the social welfare of $\bm{x}_i$
	is at least as large as that of the identity allocation, that is,
	$$\forall i : \sum_{s=1}^\ell \sum_{t=1}^\ell x^{st}_i w^{st}_i \geq 
	\sum_{s=1}^\ell f_i(s) w^{ss}_i 
	= \e{\bm{v} \sim \bm{F}, \bm{S} \sim \mathcal{A}(\bm{v})}{v_i(S_i)} 
	\enspace.$$

	Thus, we have that
	$$SW^{\mathcal{M}_\mathcal{A}} =
	\sum_{i=1}^n \sum_{s, t=1}^\ell x^{st}_i w^{st}_i
	\geq \sum_{i=1}^n \e{\bm{v} \sim \bm{F}, \bm{S} \sim \mathcal{A}(\bm{v})}
	{v_i(S_i)} = \e{\bm{v} \sim \bm{F}, \bm{S} \sim \mathcal{A}(\bm{v})}
	{\sum_{i=1}^n v_i(S_i)} = SW^\mathcal{A} \enspace.$$
\end{proof}

\subsection{Estimating values by sampling.}

There is still one technical issue that we need to settle in the reduction.
In this section, we will briefly discuss how to use the standard sampling
technique to obtain good estimated values of 
$\bm{w}_i = \{w^{st}_i\}_{1 \leq s, t \leq \ell}$ 
for the induced assignment problem of agent $i$ for $1 \leq i \leq n$. 

By definition, $w^{st}_i$ is the expectation of a random
variable $v^s_i(S_i)$, where $S_i$ is the allocated service given by 
$\mathcal{A}$ over random realization of the valuations $\bm{v}_{-i}$ of other agents
and random coin flips of the algorithm. Hence, if the standard deviation 
of $v^s_i(S_i)$ is not too large compared to its mean (no more than a 
polynomial factor), then we can draw polynomially many independent samples and 
take the average value as our estimated value. 
More concretely, the sampling algorithm proceeds as follows.
\begin{enumerate}
	\item Draw $N = 4 \, c^2 \log(n \ell^2 /\epsilon) / \epsilon^2$ independent 
		samples of $\bm{v} \sim \bm{F}$ conditioned on that the valuation
		of agent $i$ is $v^t_i$, where 
		$$c = \frac{\sd{\bm{v}_{-i}, \bm{S} \sim \mathcal{A}(v^t_i, \bm{v}_{-i})}
		{v^s_i(S_i)}}
		{\e{\bm{v}_{-i}, \bm{S} \sim \mathcal{A}(v^t_i, \bm{v}_{-i})}{v^s_i(S_i)}}
		\enspace.$$
		Let $\bm{v}^1, \dots, \bm{v}^N$ denote these $N$ sample.
	\item Use algorithm $\mathcal{A}$ to compute an allocation $\bm{S}^k \sim 
		\mathcal{A}(\bm{v}^k)$ for each sample $\bm{v}^k$, $1 \leq k \leq N$.
	\item Let $\hat{w}^{st}_i$ be the average of $v^s_i(S^k_i)$, $1 \leq k
		\leq N$.
\end{enumerate}

\begin{lemma}
	\label{lemma:relative}
	The estimated values $\hat{\bm{w}}_i$, $1 \leq i \leq n$, by the above
	sampling procedure satisfy for any $1 \leq i \leq n$ and $1 \leq s, t \leq 
	\ell$,
	$$\hat{w}^{st}_i \in \left[(1 - \epsilon) w^{st}_i,
	(1 + \epsilon) w^{st}_i\right]$$
	with probability at least $1 - \epsilon$.
\end{lemma}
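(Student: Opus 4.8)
The plan is to apply a Chernoff/Hoeffding-type concentration bound to each individual estimate $\hat{w}^{st}_i$ and then take a union bound over all $n\ell^2$ triples $(i,s,t)$. First I would fix $i$, $s$, $t$ and observe that $\hat{w}^{st}_i$ is the empirical average of $N$ i.i.d.\ copies of the random variable $X = v^s_i(S_i)$, where $\bm{v}_{-i}\sim F_{-i}$, $\bm{S}\sim\mathcal{A}(v^t_i,\bm{v}_{-i})$, so $\e{}{X} = w^{st}_i$ and $\sd{}{X} = c\, w^{st}_i$ by the definition of $c$. The goal is the relative-error bound $|\hat{w}^{st}_i - w^{st}_i| \le \epsilon\, w^{st}_i$, i.e., a statement about concentration of a sum around its mean at relative scale $\epsilon$ — this is exactly what Chebyshev's inequality on the sample mean, or better a Bernstein-type bound, delivers once the variance is controlled by $c^2 (w^{st}_i)^2$.

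The key steps, in order: (1) For a fixed triple, write $\hat{w}^{st}_i - w^{st}_i = \frac1N\sum_{k=1}^N (X_k - \e{}{X_k})$ with the $X_k$ i.i.d.; the variance of this average is $c^2 (w^{st}_i)^2 / N$. (2) Apply a concentration inequality — since the $X_k$ are bounded in $[0, v_{max}]$ one can invoke Bernstein's inequality, or alternatively a Chernoff bound after rescaling $X_k / v_{max} \in [0,1]$ and noting $w^{st}_i$ may be small relative to $v_{max}$; the cleanest route is Bernstein, which with variance proxy $c^2 (w^{st}_i)^2$ gives failure probability at most $2\exp(-\Omega(N\epsilon^2/c^2))$ for deviation $\epsilon w^{st}_i$ (the boundedness term is lower order once $N \gtrsim c^2\log(\cdot)/\epsilon^2$). (3) Plug in $N = 4 c^2 \log(n\ell^2/\epsilon)/\epsilon^2$, so the per-triple failure probability is at most $\epsilon/(n\ell^2)$ (up to the constant $4$ absorbing the factor-of-$2$ and any constant in the exponent). (4) Union-bound over the $\le n\ell^2$ triples $(i,s,t)$ to conclude that with probability at least $1-\epsilon$ every estimate is within relative error $\epsilon$ simultaneously.

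The main obstacle — and the only non-routine point — is justifying a \emph{relative}-error concentration bound when $w^{st}_i$ could be much smaller than the a priori range $[0,v_{max}]$ of the summands: a naive Hoeffding bound gives additive error scaling with $v_{max}$, not with $w^{st}_i$, which is too weak. This is precisely why the parameter $c$ (the coefficient of variation of $v^s_i(S_i)$) is introduced: the implicit standing assumption is that $c$ is at most polynomial in the input size, so that the variance-based Bernstein bound, whose deviation term scales with $\sd{}{X} = c\,w^{st}_i$ rather than $v_{max}$, yields the relative guarantee after polynomially many samples. I would state this assumption explicitly where the lemma is invoked, and otherwise the argument is a textbook application of Bernstein's inequality plus a union bound. (Note also that the running time is polynomial because $N$ is polynomial in the input size, $1/\epsilon$, and $c$, and each sample requires one call to $\mathcal{A}$.)
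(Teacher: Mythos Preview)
Your proposal is correct and follows essentially the same route as the paper: compute the mean and standard deviation of the sample average, apply a concentration inequality to bound the per-triple failure probability by $\epsilon/(n\ell^2)$, and union-bound over all $n\ell^2$ triples. The paper invokes ``Chernoff--Hoeffding'' somewhat loosely and phrases the deviation as a multiple of the standard deviation; you are in fact more careful in identifying that a variance-based bound (Bernstein) is what is really needed for a \emph{relative}-error guarantee when $w^{st}_i$ may be much smaller than $v_{max}$, and in flagging the polynomial-$c$ assumption explicitly.
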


\begin{proof}
	We shall have that 
	\begin{align*}
		\e{}{\hat{w}^{st}_i} & = \e{\bm{v}_{-i}, \bm{S} \sim 
		\mathcal{A}(v^t_i,\bm{v}_{-i})}{v^s_i(S_i))} = w^{st}_i \enspace, \\
		\sd{}{\hat{w}^{st}_i} & = \frac{1}{\sqrt{N}} \, \sd{\bm{v}_{-i}, \bm{S} 
		\sim \mathcal{A}(v^t_i,\bm{v}_{-i})}{v^s_i(S_i)} 
		= \frac{c}{\sqrt{N}} \, \e{}{\hat{w}^{st}_i}
		= \frac{c}{\sqrt{N}} \, w^{st}_i \enspace.
	\end{align*}
	
	By Chernoff-Hoeffding bound, we get
	\begin{eqnarray*}
		\pr{}{\left|\hat{w}^{st}_i - w^{st}_i\right| > 
		\epsilon w^{st}_i} 
		& = & \pr{}{\left|\hat{w}^{st}_i - \e{}{\hat{w}^{st}_i}\right| > 
		\frac{\epsilon \sqrt{N}}{c} \, \sd{}{\hat{w}^{st}_i}} \\
		& = & \pr{}{\left|\hat{w}^{st}_i - \e{}{\hat{w}^{st}_i}\right| > 
		2 \, \sqrt{\log{(n \ell^2/\epsilon)}} \, \sd{}{\hat{w}^{st}_i}} \\
		& \leq & e^{- \log{(n \ell^2 /\epsilon)}} = \frac{\epsilon}{n \ell^2}
		\enspace.
	\end{eqnarray*}

	Since we only need to estimate $n \ell^2$ values,
	by union bound we get that with probability at least $1 - \epsilon$
	the estimated value $\hat{w}^{st}_i$ is within
	$\epsilon$ relative error compared to $w^{st}_i$ for all
	$1 \leq i \leq n$ and $1 \leq s, t \leq \ell$.
\end{proof}

Thus, if the allocation algorithm $\mathcal{A}$ admits $SW^\mathcal{A}$ social welfare and
the ratio $c$ is only polynomially large, then by part 2 of Theorem 
\ref{thm:welfarereduction} we get that mechanism $\mathcal{M}_\mathcal{A}$ gives
$(1-3\epsilon) \cdot SW^\mathcal{A}$ social welfare and is $4\epsilon v_{max}$-BIC.
The running time is polynomial in the input size and $1 / \epsilon$, assuming
a black-box call to algorithm $\mathcal{A}$ can be done in a single step. 
In other words, we get a FPTAS reduction.

\medskip

The next lemma gives an alternative bound of the sampling error with respect
to absolute error.

\begin{lemma}
	\label{lemma:absolute}
	If we draw $N' = 4\log(n\ell^2/\epsilon)/\epsilon^2$ independent samples, 
	then with probability at least $1-\epsilon$ the estimated values 
	$\hat{w}^{st}_i \in [w^{st}_i - \epsilon v_{max}, 
	w^{st}_i + \epsilon v_{max}]$ for all $1 \leq i \leq n$ and $1 \leq s, t
	\leq \ell$.
\end{lemma}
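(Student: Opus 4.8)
The plan is to mirror the structure of Lemma~\ref{lemma:relative}, but replace the multiplicative Chernoff bound by the additive Hoeffding bound, which is exactly what lets us drop the factor $c$ and the per-coordinate dependence on the mean. First I would fix an agent $i$ and a pair $1 \leq s, t \leq \ell$, and observe that $\hat{w}^{st}_i$ is the empirical average of $N'$ i.i.d.\ copies of the random variable $v^s_i(S_i)$, where $\bm{v}_{-i} \sim F_{-i}$ conditioned on agent $i$'s valuation being $v^t_i$ and $\bm{S} \sim \mathcal{A}(v^t_i, \bm{v}_{-i})$; hence $\e{}{\hat{w}^{st}_i} = w^{st}_i$. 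The one structural fact I need is that each sample $v^s_i(S_i)$ lies in the bounded range $[0, v_{max}]$ — this is immediate from the definition of $v_{max}$ as $\max_{i, v \in V_i, S \in I_i} v(S)$ — so that each summand has range at most $v_{max}$.

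Next I would apply the Hoeffding bound to the average of $N'$ independent variables supported on $[0, v_{max}]$, which gives
$$\pr{}{\left|\hat{w}^{st}_i - w^{st}_i\right| > \epsilon v_{max}} \leq 2\, e^{-2 N' \epsilon^2}.$$
Plugging in $N' = 4\log(n\ell^2/\epsilon)/\epsilon^2$ yields $2\, e^{-8\log(n\ell^2/\epsilon)} = 2 (\epsilon/(n\ell^2))^8$, which is comfortably below $\epsilon / (n\ell^2)$ for the relevant parameter ranges; if one prefers to avoid the stray factor of $2$ and the slack, the one-sided version of Hoeffding applied to each tail and a union bound over the two tails gives the same conclusion, or one can simply note $2 e^{-8 L} \le e^{-L}$ whenever $L = \log(n\ell^2/\epsilon) \ge 1$ (i.e.\ $n\ell^2/\epsilon \ge e$), which always holds here. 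Either way, the probability that $\hat{w}^{st}_i$ deviates from $w^{st}_i$ by more than $\epsilon v_{max}$ is at most $\epsilon/(n\ell^2)$.

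Finally I would take a union bound over all $n\ell^2$ triples $(i, s, t)$: the probability that some estimate fails to lie in $[w^{st}_i - \epsilon v_{max}, w^{st}_i + \epsilon v_{max}]$ is at most $n\ell^2 \cdot \epsilon/(n\ell^2) = \epsilon$, so with probability at least $1 - \epsilon$ all estimates are within additive error $\epsilon v_{max}$, as claimed. I do not anticipate a real obstacle here; the only point requiring a moment's care is the bookkeeping on the constant in the exponent — making sure the chosen $N'$ actually drives the single-coordinate failure probability down to $\epsilon/(n\ell^2)$ after the Hoeffding constant $2$ in the exponent and the factor $2$ in front are accounted for — but as noted this follows from the trivial observation $\log(n\ell^2/\epsilon) \geq 1$. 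One could also remark, in analogy with the discussion after Lemma~\ref{lemma:relative}, that combining this with part~3 of Theorem~\ref{thm:welfarereduction} yields an $\epsilon v_{max}$-scaled reduction with running time polynomial in the input size and $1/\epsilon$ without needing any bound on the variance ratio $c$.
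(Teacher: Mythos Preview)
Your proposal is correct and follows essentially the same approach as the paper: bound the single-coordinate failure probability by a concentration inequality for bounded samples and then union-bound over the $n\ell^2$ triples. The only cosmetic difference is that you invoke Hoeffding's inequality directly on the $[0,v_{max}]$-bounded samples, whereas the paper first bounds $\sd{}{\hat{w}^{st}_i} \le v_{max}/\sqrt{N'}$ and then applies a Chernoff-type tail in units of standard deviations; both routes yield the same $\epsilon/(n\ell^2)$ per-coordinate bound.
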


\begin{proof}
	In this case, we have 
	\begin{align*}
		\e{}{\hat{w}^{st}_i} & = \e{\bm{v}_{-i}, \bm{S} \sim 
		\mathcal{A}(v^t_i,\bm{v}_{-i})}{v^s_i(S_i))} = w^{st}_i \enspace, \\
		\sd{}{\hat{w}^{st}_i} & = \frac{1}{\sqrt{N'}} \, \sd{\bm{v}_{-i}, 
		\bm{S}\sim\mathcal{A}(v^t_i,\bm{v}_{-i})}{v^s_i(S_i)} 
		\leq \frac{1}{\sqrt{N'}}\max_{S_i} \, v^s_i(S_i) \leq \frac{1}
		{\sqrt{N'}} \, v_{max}
		\enspace.
	\end{align*}
	
	By Chernoff bound we get that
	\begin{eqnarray*}
		\pr{}{\left|\hat{w}^{st}_i - w^{st}_i\right| > 
		\epsilon v_{max}}
		& \leq & \pr{}{\left|\hat{w}^{st}_i - \e{}{\hat{w}^{st}_i}\right| > 
		\frac{\epsilon}{\sqrt{N'}} \sd{}{\hat{w}^{st}_i}} \\
		& = & \pr{}{\left|\hat{w}^{st}_i - \e{}{\hat{w}^{st}_i}\right| > 
		2 \, \sqrt{\log{(n \ell^2/\epsilon)}} \, \sd{}{\hat{w}^{st}_i}} \\
		& \leq & e^{- \log{(n \ell^2 /\epsilon)}} = \frac{\epsilon}{n \ell^2}
		\enspace.
	\end{eqnarray*}

	By union bound, we have $\hat{w}^{st}_i \in [w^{st}_i - \epsilon v_{max},
	w^{st}_i + \epsilon v_{max}]$ for all $1 \leq i \leq n$ and $1 \leq s, t
	\leq \ell$.
\end{proof}

Suppose the ratio $v_{max} / SW^\mathcal{A}$ is upper bounded by a polynomial of 
the input size.
Then, if we choose $\epsilon = \delta \, SW^\mathcal{A} / 2n v_{max}$ in the above
lemma, we will get that
$$\left|\hat{w}^{st}_i - w^{st}_i\right| < \delta \, SW^\mathcal{A} / 2n \enspace.$$

By part 3 of Theorem {\ref{thm:welfarereduction} we obtain that
mechanism $\mathcal{M}_\mathcal{A}$ provides at least $(1 - \delta) SW^\mathcal{A}$ of social
welfare and is $4 \epsilon$-BIC and IR.
The running time is polynomial in the input size and $1 / \delta$.

\section{Reductions for revenue and residual surplus}

In the reduction for social welfare in the previous section, 
we only consider market-clearing allocations in the induced assignment 
problems. This is because for any agent 
$i$, we want to make sure that the intermediate algorithm $\mathcal{B}_i$ 
is transparent to all agents except agent $i$. If we restrict ourselves to 
market-clearing allocations, we do not know any way to get reasonable bounds 
on revenue and residual surplus.

However, if we focus on an important sub-class of multi-parameter mechanism
design problems that includes the combinatorial auction problem and its special
cases, then we have some flexibility in choosing the allocations for the 
induced assignment problem and obtain theoretical bounds on revenue and
residual surplus. More concretely, we will consider mechanism design problems
that are {\em downward-closed}. We let $\phi$ denote the null service so that
allocating $\phi$ to an agent implies that agent is not served, that is,
$v_i(\phi) = 0$ for all $1 \leq i \leq n$. 

\begin{definition}
	A multi-parameter mechanism design problem 
	$\langle \bm{I}, \bm{J}, \bm{V}, \bm{F} \rangle$
	is {\em downward-closed} if for any 
	feasible allocation $\bm{S} = (S_1, \dots, S_n) \in \bm{J}$ and any
	$1 \leq i \leq n$, the allocation 
	$(S_1, \dots, S_{i-1}, \phi, S_{i+1}, \dots, S_n)$ is also feasible.
\end{definition}

We let $\delta = \min \{f_i(s) : 1 \leq i \leq n, 1 \leq s \leq \ell, 
f_i(s) > 0\}$ denote the granularity of the prior distributions. We will 
prove the following result.

\begin{theorem}
	\label{thm:otherreduction}
	For any algorithm $\mathcal{A}$, there is a mechanism that is IR, BIC, and 
	provides at least $\Omega(SW^\mathcal{A} / \log(1 / \delta))$ of revenue 
	(residual surplus).
\end{theorem}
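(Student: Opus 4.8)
\medskip
\noindent\textbf{Proof plan.}
The plan is to reuse the decoupling construction and the characterization of BIC via envy-free prices of the induced assignment problems (Lemma~\ref{lemma:characterization}), and to use downward-closedness precisely to loosen the market-clearing requirement that killed revenue extraction in the social-welfare reduction. Since the null service $\phi$ is always feasible, the intermediate algorithm $\mathcal{B}_i$ may leave agent $i$ unserved; in the induced assignment problem of agent $i$ this corresponds to adjoining one \emph{null product} of value $0$ to every virtual buyer, whose allocation is free, and routing some mass of a virtual buyer there instead of to its matched type. An allocation that is market-clearing on the demand side but leaves mass on the null product still defines a legitimate $\mathcal{B}_i$; to keep $\mathcal{B}_i$ transparent to the other agents --- the property that let us treat the $n$ induced problems independently in the proof of Theorem~\ref{thm:welfarereduction} --- I would make the ``unserve'' decision only at the allocation stage, feeding $\mathcal{A}$ a report still distributed as $F_i$ and overriding $S_i$ to $\phi$ afterwards. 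With transparency intact, everything again reduces to one induced assignment problem per agent.

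\medskip
\noindent\textbf{Turning the welfare of one induced problem into revenue.}
Fix agent $i$ and first reshape $\bm{w}_i$ with a welfare-maximizing market-clearing allocation exactly as in the social-welfare reduction, so that the identity allocation of the manipulated problem is welfare-maximizing with envy-free prices; write $V_i := \e{\bm{v} \sim \bm{F}, \bm{S} \sim \mathcal{A}(\bm{v})}{v_i(S_i)} = \sum_s f_i(s)\,\widetilde{w}^{ss}_i$ for its welfare and let $r(q)$ be the resulting quantile curve, i.e.\ $r(q)$ is the largest value $v$ such that an $f_i$-mass at least $q$ of virtual buyers have $\widetilde{w}^{ss}_i \ge v$. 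I would then impose a \emph{reserve in quantile space}: draw a level $q$ uniformly from the geometric grid $\{1,\tfrac12,\tfrac14,\dots\}$ truncated at $\delta$ (so $O(\log(1/\delta))$ levels), keep served exactly the top $q$-mass of virtual buyers, route the rest to the null product, and raise every kept price up to $r(q)$. One checks via Lemma~\ref{lemma:assignment1} that the identity allocation of this thresholded problem is still welfare-maximizing --- the null product absorbs precisely the buyers whose best response at the raised prices is to take nothing --- and that the new prices are envy-free for it, so Lemma~\ref{lemma:characterization} still gives BIC while IR is preserved because a kept price never exceeds the corresponding value. The payment collected from agent $i$ at level $q$ is at least $q\cdot r(q)$; since no event of $F_i$ has probability in $(0,\delta)$ the curve $r$ is flat on $(0,\delta]$, hence $V_i=\int_0^1 r(q)\,dq$ is within a constant of $\sum_{q\in\mathrm{grid}} q\,r(q)$, and averaging over the uniform $q$ gives expected revenue $\Omega(V_i/\log(1/\delta))$ from agent $i$. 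Summing over $i$ and using $\sum_i V_i \ge SW^\mathcal{A}$ (the computation ending the proof of Theorem~\ref{thm:welfarereduction}, now with an inequality) yields $R \ge \Omega(SW^\mathcal{A}/\log(1/\delta))$.

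\medskip
\noindent\textbf{Residual surplus, and the main obstacle.}
For residual surplus I would keep the same thresholded allocations but turn the price knob the other way, charging among all envy-free price vectors supporting the identity allocation the pointwise smallest one (equivalently, giving each kept virtual buyer the largest BIC-feasible utility), so that $RS = \sum_i\sum_s f_i(s)\big(\widetilde{w}^{ss}_i-\widetilde{p}^{\,s}_i\big)$; the geometric-grid averaging is symmetric and again costs only a $\log(1/\delta)$ factor against $\sum_i V_i$. The step I expect to be the real obstacle is making the thresholding genuinely compatible with Lemma~\ref{lemma:characterization}: the reserve must be applied so that the \emph{identity} allocation --- not merely some allocation --- stays welfare-maximizing in the null-augmented induced problem, and one must check that the envy-free identities $\widetilde{w}^{ss}_i-\widetilde{p}^{\,s}_i=\max_t\{\widetilde{w}^{st}_i-\widetilde{p}^{\,t}_i\}\ge 0$ survive the simultaneous price increase, for \emph{all} virtual buyers at once rather than one at a time. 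Pinning down the right grid and the exact thresholding rule so that these hold together is where downward-closedness is really used and what fixes the constant hidden in the $\Omega(\cdot)$.
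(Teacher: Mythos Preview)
Your high-level plan is exactly the paper's: use downward-closedness so that the intermediate algorithm $\mathcal{B}_i$ may send mass to the null service $\phi$, restore transparency by feeding $\mathcal{A}$ a randomly completed report distributed as $F_i$ and overriding $S_i\gets\phi$ afterwards, and then do a geometric-grid reserve argument per agent followed by pigeonhole over $O(\log(1/\delta))$ levels. The paper also treats residual surplus as the symmetric problem.

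The obstacle you flag is real, and the specific thresholding rule you propose does not clear it. After reshaping you can have, say, $\widetilde{w}^{11}_i=10$, $\widetilde{w}^{12}_i=11$, $\widetilde{w}^{21}_i=3$, $\widetilde{w}^{22}_i=5$ with equal masses; identity is strictly welfare-maximizing (envy-free with $p^1=0$, $p^2=1$), yet if you keep both buyers and set both prices to $r(q)=5$, buyer $1$ now strictly prefers product $2$. Identity-optimality only pins down \emph{differences} of the envy-free prices, so equalizing prices across kept products can break envy-freeness; the same issue bites the ``dropped buyer's best response is null'' claim and, dually, your residual-surplus idea of taking the pointwise smallest envy-free prices for a thresholded identity allocation.

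The paper sidesteps the obstacle by dropping the identity requirement altogether. It first proves a stronger form of the characterization: for downward-closed problems, \emph{any} envy-free (not necessarily market-clearing) solution $(\bm{x}_i,\bm{p}_i)$ of the original induced assignment problem yields a BIC, IR mechanism via the decoupling-plus-override construction. With that in hand, no reshaping is needed. For revenue, the paper's algorithm $\mathcal{C}_R$ works in value space: for each $u_k=u_{\max}/2^k$, $k=1,\dots,\log(1/\delta)$, it adds, for every virtual product $t$, a dummy buyer with demand exceeding one who values only $t$ at $u_k$, re-solves the welfare-maximizing LP of this variant, and projects back. Envy-freeness is then automatic from LP duality, and complementary slackness (the dummy's demand is never tight) forces every price $p^t\ge u_k$; the telescoping sum $\sum_k u_k\sum_{s,t:w^{st}_i\ge u_k}x^{st}_i\ge\tfrac12\sum_{s,t}x^{st}_iw^{st}_i$ plus pigeonhole gives the $\Omega(1/\log(1/\delta))$ factor. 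Residual surplus is handled by the symmetric trick of adding dummy \emph{products} with value $u_k$ for each buyer. So the missing idea in your proposal is not a better thresholding rule but the realization that you do not have to keep identity optimal at all.
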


\subsection{Meta-reduction.}

We will first introduce a meta-reduction scheme based on algorithms that
compute envy-free solutions for fractional assignment problems.
Suppose $\mathcal{C}$ is an algorithm that computes envy-free solutions
$(\bm{x}, \bm{p})$ for any given fractional assignment problem.
Let $\mathcal{A}$ be an algorithm for a multi-parameter mechanism design problem
$\langle \bm{I}, \bm{J}, \bm{V}, \bm{F} \rangle$.
We will convert algorithm $\mathcal{A}$ into to a mechanism $\mathcal{M}^\mathcal{C}_\mathcal{A}$:

\begin{enumerate}
	\item For each agent $i$ \hspace*{\fill} {\bf (Pre-computation)}
		\begin{enumerate}
			\item Estimate the values 
				$\bm{w}_i = \{w^{st}_i\}_{1 \leq s, t \leq \ell}$ 
				for the induced assignment problem of agent $i$
				with respect to $\mathcal{A}$. Let $\hat{\bm{w}}_i = 
				\{\hat{w}^{st}_i\}_{1 \leq s, t \leq \ell}$ denote the
				estimated values.
			\item Use $\mathcal{C}$ to solve the induced assignment problems 
				with estimated values. Let
				$(\bm{x}_i, \bm{p}_i)$ denote the solution by $\mathcal{C}$ for
				the induce assignment problem of agent $i$.
			\item Let $y^t_i = f_i(t) - \sum_{s=1}^\ell x^{st}_i$ denote the 
				unallocated supply of virtual product $t$ in solution 
				$(\bm{x}_i, \bm{p}_i)$ for all $1 \leq i \leq n$ and 
				$1 \leq t \leq \ell$. 
			\item Let $y_i = \sum_{t=1}^\ell y^t_i$ 
				denote the total amount of unallocated virtual products 
				in $(\bm{x}_i, \bm{p}_i)$ for all $1 \leq i \leq n$.
		\end{enumerate}
	\item Manipulate the valuations with intermediate algorithm $\mathcal{B}_i$,
		$1 \leq i \leq n$, as follows:
		\hspace*{\fill} {\bf (Decoupling)}
		\begin{enumerate}
			\item Suppose the reported valuation of agent $i$ is $v'_i = 
				v^s_i$.
			\item Let $\widetilde{v}_i = \mathcal{B}_i(v'_i) = v^t_i$ with 
				probability $x^i_{st} / f_i(s)$ for $1 \leq t \leq \ell$.
			\item With probability $1 - \sum_t x^{st}_i / f_i(s)$, the
				manipulated valuation $\widetilde{v}_i$ is unspecified 
				in the previous step. In this
				case, let $\widetilde{v}_i = v^t_i$ with probability 
				$y^t_i / y_i$ for $1 \leq t \leq \ell$.
		\end{enumerate}
	\item Allocate services as follows: \hspace*{\fill} {\bf (Allocation)}
		\begin{enumerate}
			\item Compute a tentative allocation
				$\widetilde{\bm{S}} = 
				(\widetilde{S}_1, \dots, \widetilde{S}_n) = 
				\mathcal{A}(\widetilde{\bm{v}})$.
			\item For each agent $i$, let $S_i = \widetilde{S}_i$ if the
				manipulated valuation $\widetilde{v}_i$ is specified in
				step 2b). Let $S_i = \phi$ otherwise. Allocate services
				according to $\bm{S}$.
			\item For each agent $i$, suppose the reported valuation is 
				$v'_i = v^s_i$ and the manipulated valuation is 
				$\widetilde{v}_i = v^t_i$, charge agent $i$ with price 
				$p^t_i v^s_i(S_i) / \hat{w}^{st}_i$.
		\end{enumerate}
\end{enumerate}

The following theorem states the above meta-reduction scheme converts 
algorithms into IR and BIC mechanisms.

\begin{theorem}
	Suppose the algorithm $\mathcal{C}$ always provides envy-free solutions.
	\begin{enumerate}
		\item If the estimated values $\hat{\bm{w}}_i$
			are accurate, then mechanism $\mathcal{M}^\mathcal{C}_\mathcal{A}$ is IR and BIC.
		\item If the estimated values $\hat{\bm{w}}_i$ satisfy that for
			any $1 \leq s, t \leq \ell$, $\hat{w}^{st}_i \in [(1 - \epsilon) 
			w^{st}_i, (1 + \epsilon) w^{st}_i]$, then $\mathcal{M}^\mathcal{C}_\mathcal{A}$ is 
			IR and $4 \epsilon v_{max}$-BIC.
		\item If the estimated values $\hat{\bm{w}}_i$ satisfy that
			for any $1 \leq s, t \leq \ell$, $\hat{w}^{st}_i \in 
			[w^{st}_i - \epsilon, w^{st}_i + \epsilon]$, then 
			$\mathcal{M}^\mathcal{C}_\mathcal{A}$ is IR and $4 \epsilon$-BIC.
	\end{enumerate}
\end{theorem}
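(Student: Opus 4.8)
The plan is to mirror the proof of Theorem~\ref{thm:welfarereduction} almost verbatim, the only new feature being that the decoupling step of $\mathcal{M}^\mathcal{C}_\mathcal{A}$ is allowed to leave some virtual supply unallocated and then ``re-distributes'' it according to the residual supplies $y^t_i/y_i$ before handing the input to $\mathcal{A}$ and setting $S_i=\phi$ for those agents. First I would verify the interface constraint \eqref{eq:interconstraint}: if $v'_i\sim F_i$ then $\Pr[\widetilde v_i=v^t_i]=\sum_s f_i(s)\cdot x^{st}_i/f_i(s)+\big(1-\sum_s\sum_k x^{sk}_i/f_i(s)\cdot f_i(s)/\text{(wait)}\big)$; more carefully, the mass routed by step~2b is $\sum_s x^{st}_i$, and the leftover mass $1-\sum_{s,t}x^{st}_i=\sum_s y^s_i\cdot(\text{normalized})$ — in fact the total leftover probability is $\sum_s f_i(s)(1-\sum_t x^{st}_i/f_i(s))=\sum_s y^s_i=y_i$, wait that's not a probability unless $\sum_s f_i(s)=1$; anyway the leftover total mass equals $\sum_t y^t_i=y_i$ and step~2c routes a $y^t_i/y_i$ fraction of it to $v^t_i$, contributing $y^t_i$. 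Hence $\Pr[\widetilde v_i=v^t_i]=\sum_s x^{st}_i+y^t_i=\sum_s x^{st}_i+f_i(t)-\sum_s x^{st}_i=f_i(t)$, so \eqref{eq:interconstraint} holds and, exactly as in Section~\ref{sec:socialwelfare}, each $\mathcal{B}_i$ is transparent to the other agents.

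Next I would compute the manipulated values $\widetilde{\bm w}_i$ and expected prices $\widetilde{\bm p}_i$ of the induced assignment problem of agent $i$. The new point is that when the leftover branch fires, $S_i$ is forced to $\phi$ and $v^r_i(\phi)=0$, so that branch contributes nothing to $\widetilde w^{rs}_i$; when agent $i$ truthfully reports (the case $r=s$ relevant to the BIC inequality), the reported ``virtual buyer'' $s$ only ever lands on products $t$ with $x^{st}_i>0$, and for those the envy-freeness of $(\bm x_i,\bm p_i)$ produced by $\mathcal{C}$ gives $w^{st}_i-p^t_i=\max_k\{w^{sk}_i-p^k_i\}\ge 0$. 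Thus, with accurate estimates, $\widetilde w^{ss}_i-\widetilde p^s_i=\sum_t (x^{st}_i/f_i(s))(w^{st}_i-p^t_i)$ and, for any alternative report $s$ against true type $r$, $\widetilde w^{rs}_i-\widetilde p^s_i\le\max_k\{w^{rk}_i-p^k_i\}$ with equality at $r=s$; IR follows from $p^t_i\le w^{st}_i$ (a property of any envy-free solution, $u^s=w^{st}-p^t\ge0$ on the support and $p^t\le w^{st}$ whenever $x^{st}>0$; on products never assigned to $s$ the price charge formula never triggers). So the identity allocation is envy-free for the manipulated problem, Lemma~\ref{lemma:assignment1} says it is welfare-maximizing, and Lemma~\ref{lemma:characterization} yields that $\mathcal{M}^\mathcal{C}_\mathcal{A}$ is BIC and IR. This is part~1.

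For parts~2 and~3 I would follow the same ``tedious but routine'' perturbation argument promised after Theorem~\ref{thm:welfarereduction}: replacing $w^{st}_i$ by $\hat w^{st}_i$ changes the envy-free prices $\bm p_i$ that $\mathcal{C}$ returns, but the price actually charged, $p^t_i v^s_i(S_i)/\hat w^{st}_i$, has expectation $\sum_t (x^{st}_i/f_i(s))\,p^t_i\,(w^{st}_i/\hat w^{st}_i)$ in the manipulated problem, so each per-type utility is distorted by at most a $(1\pm\epsilon)$ multiplicative (resp.\ $\pm\epsilon$ additive) factor on a quantity bounded by $v_{max}$, giving a $4\epsilon v_{max}$-BIC (resp.\ $4\epsilon$-BIC) guarantee by the same bookkeeping as in Theorem~\ref{thm:welfarereduction}; IR is preserved because $p^t_i\le\hat w^{st}_i$ still makes the charge at most $v^s_i(S_i)$. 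The main obstacle — and the only place the proof genuinely differs from the social-welfare case — is checking that introducing the $\phi$-branch does not break either the interface identity \eqref{eq:interconstraint} or the envy-freeness of the identity allocation in the manipulated problem; once one sees that the leftover mass is routed exactly along the residual supplies $y^t_i$ and that forcing $S_i=\phi$ only ever removes (never adds) utility for the truthful virtual buyer, everything else is the argument already given.
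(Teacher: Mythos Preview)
Your proposal is correct and follows essentially the same approach as the paper: verify the interface constraint \eqref{eq:interconstraint} by showing the leftover mass routed through the $y^t_i/y_i$ branch restores $\Pr[\widetilde v_i=v^t_i]=f_i(t)$, then compute $\widetilde w^{rs}_i$ and $\widetilde p^s_i$ and show the envy-free inequality for the identity allocation holds with equality at $r=s$ (the $\phi$-branch contributing zero), with IR coming from $p^t_i\le \hat w^{st}_i$ on the support of $\bm x_i$. Your detour through Lemma~\ref{lemma:assignment1} and Lemma~\ref{lemma:characterization} is slightly more roundabout than the paper's direct verification of the BIC inequality, but it is equivalent; your IR argument is in fact a bit more careful than the paper's, since you correctly note that $p^t_i\le w^{st}_i$ is only guaranteed (and only needed) when $x^{st}_i>0$.
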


\begin{proof}
	Let us outline the proof for part 1. Proofs of the other two parts are
	calculations along the same line.

	Note that $p^t_i \leq w^{st}_i$ for all $1 \leq i \leq n$ and 
	$1 \leq s, t \leq \ell$.
	The mechanism is IR because for any $1 \leq i \leq n$ and 
	$1 \leq s \leq \ell$ the utility for an agent $i$ with valuation $v^s_i$ 
	in any realization is
	$$v^s_i(S_i) - p^t_i \frac{v^s_i(S_i)}{w^{st}_i} \geq 0 \enspace.$$
	
	Next, we will show that mechanism $\mathcal{M}^\mathcal{C}_\mathcal{A}$ is BIC.
	We first verify that the intermediate algorithms $\mathcal{B}_i$,
	$1 \leq i \leq n$, satisfy the constraint \eqref{eq:interconstraint}. 
	For any agent $i$, if its valuation $v_i$
	is drawn from distribution $F_i$, then the probability that the
	manipulated valuation $\widetilde{v}_i = \mathcal{B}_i(v_i) = v^t_i$ is 
	\begin{eqnarray*}
		\sum_{s=1}^\ell f_i(s) \, \left[\frac{x^{st}_i}{f_i(s)} + 
		\left(1 - 
		\sum_{r=1}^\ell \frac{x^{sr}_i}{f_i(s)} \right) \, \frac{y^t_i}{y_i}
		\right]
		& = & \sum_{s=1}^\ell x^{st}_i + \left( \sum_{s=1}^\ell f_i(s) - 
		\sum_{s=1}^\ell \sum_{r=1}^\ell x^{sr}_i \right) \, \frac{y^t_i}{y_i} \\
		& = & \sum_{s=1}^\ell x^{st}_i + \left( \sum_{r=1}^\ell f_i(r) - 
		\sum_{r=1}^\ell \sum_{s=1}^\ell x^{sr}_i \right) \, \frac{y^t_i}{y_i} \\
		& = & \sum_{s=1}^\ell x^{st}_i + \sum_{r=1}^\ell \left( f_i(r) - 
		\sum_{s=1}^\ell x^i_{sr} \right) \, \frac{y^t_i}{y_i} \\
		& = & \sum_{s=1}^\ell x^{st}_i + \sum_{r=1}^\ell y^r_i \,
		\frac{y^t_i}{y_i} = \sum_{s=1}^\ell x^{st}_i + y^t_i 
		= f_i(t) \enspace.
	\end{eqnarray*}

 	Thus, we get that for each agent $i$, the intermediate algorithms 
	$\mathcal{B}_j$, $1 \leq j \leq n$ and $j \neq i$, are transparent to it.
	So the expected value of agent $i$ of the service it gets, when
	its genuine valuation is $v_i = v^s_i$ and the manipulate valuation, is 
	$\widetilde{v}_i = v^t_i$ is exactly 
	$$w^{st}_i = \e{\bm{v}_{-i}, \bm{S} \sim \mathcal{A}(v^t_i, \bm{v}_{-i})}
	{v^s_i(S_i)} \enspace.$$ 

	Hence, the expected value of agent $i$ of the servie it gets, when its
	genuine valuation is $v_i = v^s_i$ and the reported valuation is $v'_i
	= v^t_i$, is 
	$$\widetilde{w}^{st}_i = 
	\sum_{r=1}^\ell \frac{x^{tr}_i}{f_i(t)} \, w^{sr}_i \enspace.$$
	
	And the expected price for agent $i$ when the reported valuation is 
	$v'_i = v^t_i$ is 
	$$\widetilde{p}^t_i = 
	\sum_{r=1}^\ell \frac{x^{tr}_i}{f_i(t)} \, \e{\bm{v}_{-i}, \bm{S} \sim 
	\mathcal{A}(v^r_i, \bm{v}_{-i})}{p^r_i \, \frac{v^t_i(S_i)}{w^{tr}_i}} 
	= \sum_{r=1}^\ell \frac{x^{tr}_i}{f_i(t)} p^r_i \enspace.$$

	Thus, the the expected utility of agent $i$, when its genuine valuation
	is $v_i = v^s_i$ and its reported valuation is $v'_i = v^t_i$, is 
	$$\widetilde{w}^{st}_i - \widetilde{p}^t_i = \sum_{r=1}^\ell
	\frac{x^{tr}_i}{f_i(t)} \, (w^{sr}_i - p^r_i) 
	\leq \sum_{r=1}^\ell \frac{x^{tr}_i}{f_i(t)} \, \max_k 
	\{w^{sk}_i - p^k_i\} = \max_k \{w^{sk}_i - p^k_i\} \enspace.$$

	Since $\bm{p}_i$ are chosen to be the envy-free prices, we have that
	$x^{sr}_i > 0$ only if $w^{sr}_i - p^r_i = \max_k \{w^{sk}_i - p^k_i\}$.
	Hence, when agent $i$ reports its valuation truthfully, that is, $s = t$,
	the above inequality if tight. Moreover, the above utility is always
	non-negative. So mechanism $\mathcal{M}^\mathcal{C}_\mathcal{A}$ is BIC.
\end{proof}

Moreover, the revenue and residual surplus of mechanism 
$\mathcal{M}^\mathcal{C}_\mathcal{A}$ 
is related to the social welfare and revenue of the induced assignment 
problems as stated in following proposition.

\begin{proposition}
	The expected revenue (residual surplus) of the mechanism 
	$\mathcal{M}^\mathcal{C}_\mathcal{A}$ equals the sum of the revenue (residual surplus) 
	of the manipulated assignment problems.
\end{proposition}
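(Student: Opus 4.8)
The plan is to track the revenue and the residual surplus of $\mathcal{M}^\mathcal{C}_\mathcal{A}$ agent-by-agent and express each quantity in terms of the induced (manipulated) assignment problems, using the key structural fact established in the proof of the previous theorem: for each agent $i$, the intermediate algorithms $\mathcal{B}_j$ of the other agents are transparent, so from agent $i$'s Bayesian viewpoint the expected value of the service and the expected price behave exactly as in the manipulated assignment problem of agent $i$. First I would write the expected revenue of the mechanism as
$$
R^{\mathcal{M}^\mathcal{C}_\mathcal{A}} = \sum_{i=1}^n \e{v'_i \sim F_i}{\text{expected price charged to agent } i}.
$$
Conditioning on the reported valuation $v'_i = v^s_i$ (which occurs with probability $f_i(s)$) and then on the manipulated valuation $\widetilde{v}_i = v^t_i$ (which occurs with probability $x^{st}_i / f_i(s)$ among the ``specified'' branch), the price is $p^t_i v^s_i(S_i)/\hat w^{st}_i$, and taking expectation over $\bm v_{-i}$ and $\mathcal{A}$ gives $p^t_i$ in the accurate-estimate case. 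So agent $i$'s contribution to the revenue is $\sum_{s,t} f_i(s) \cdot \frac{x^{st}_i}{f_i(s)} \cdot p^t_i = \sum_{s,t} x^{st}_i p^t_i$, which is precisely the revenue $\sum_t \beta^t p^t = \sum_t (\sum_s x^{st}_i) p^t_i$ of the solution $(\bm{x}_i,\bm{p}_i)$ in agent $i$'s induced assignment problem (here I use that $y^t_i$-branch valuations are allocated $S_i = \phi$, hence contribute zero value and zero price since $v^s_i(\phi) = 0$). Summing over $i$ yields the claim for revenue.

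For residual surplus, I would note that the expected residual surplus of $\mathcal{M}^\mathcal{C}_\mathcal{A}$ is $\sum_i \e{}{v^s_i(S_i) - p_i}$, and by the same conditioning the contribution of agent $i$ is $\sum_{s,t} f_i(s)\cdot\frac{x^{st}_i}{f_i(s)}\cdot(w^{st}_i - p^t_i) = \sum_{s,t} x^{st}_i (w^{st}_i - p^t_i)$ — again the $\phi$-branch contributes nothing. This is exactly the ``residual surplus'' of the assignment solution, i.e.\ the total utility $\sum_{s,t} x^{st}_i(w^{st}_i - p^t_i)$ extracted by the virtual buyers in agent $i$'s induced assignment problem; summing over $i$ gives the proposition. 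The only genuine subtlety, and the step I expect to require the most care, is the bookkeeping of the unallocated-supply branch in the decoupling step: one must confirm that when the manipulated valuation is assigned via the $y^t_i/y_i$ rule the agent receives $S_i = \phi$ and is therefore charged price $0$ and derives value $0$, so these branches drop out cleanly and the mechanism's revenue and residual surplus coincide term-by-term with those of the assignment problems $(\bm{x}_i,\bm{p}_i)$ rather than with the market-clearing completions of them. Once that is in place, the identity is immediate from linearity of expectation and the transparency of the $\mathcal{B}_{-i}$'s.
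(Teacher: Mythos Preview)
The paper states this proposition without proof, treating it as an immediate consequence of the calculations already carried out in the proof of the preceding theorem (transparency of the $\mathcal{B}_{-i}$'s, the formulas for $\widetilde{w}^{st}_i$ and $\widetilde{p}^t_i$, and the fact that the unspecified branch yields $S_i=\phi$). Your proposal correctly supplies exactly those omitted details: conditioning on $v'_i=v^s_i$ and on the specified-branch event $\widetilde{v}_i=v^t_i$, using transparency to identify the conditional expected value with $w^{st}_i$ and hence the conditional expected price with $p^t_i$, observing that the $y^t_i/y_i$ branch contributes zero value and zero price because $S_i=\phi$, and summing to get $\sum_{s,t} x^{st}_i p^t_i$ and $\sum_{s,t} x^{st}_i(w^{st}_i-p^t_i)$. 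This is precisely the argument implicit in the paper, and your flag that the identity holds exactly only in the accurate-estimate case is an appropriate caveat that the paper glosses over.
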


By choosing proper allocation algorithm $\mathcal{C}$, we can obtain
theoretical bounds for the revenue or residual surplus in the manipulated
induced assignment problems and thus theoretical bounds for mechanism
$\mathcal{M}^\mathcal{C}_\mathcal{A}$.

\subsection{Assignment algorithms.}

In this section, we will introduce two algorithms for computing envy-free 
solutions for the induced assignment problems. These two algorithms
provides theoretical bounds for revenue and residual surplus.

\paragraph{Revenue.} 
The first algorithm provides revenue that is a $\Omega(1/\log(1/\delta))$
fraction of $SW^\mathcal{A}$, the social welfare by algorithm $\mathcal{A}$. The idea is
to introduce proper reserve prices to the induced assignment problems by
redundant virtual buyers. This is inspired by the techniques by
Guruswami et al.~\cite{guruswami2005profit}.
For the induced assignment problem of agent $i$, $1 \leq i \leq n$, 
the assignment algorithm $\mathcal{C}_R$ for revenue maximization proceeds as
follows:
\begin{enumerate}
	\item Find the social welfare maximizing allocation $\bm{x}_i = 
		\{x^{st}_i\}_{1 \leq s, t \leq \ell}$.
	\item Suppose $u_{max}$ is the maximal valuation 
		among the virtual buyer-product pair $(s, t)$ with non-zero $x^{st}_i$,
		that is, 
		$$u_{max} = \max \{w^{st}_i : 1 \leq s, t \leq \ell, x^{st}_i > 0\}
		\enspace.$$
	\item Recall that $\delta = \min \{f_i(t) : 1 \leq i \leq n, 1 \leq t \leq
		\ell, f_i(t) > 0\}$ denotes the granularity of the prior distribution. 
		For $1 \leq k \leq \log(1 / \delta)$:
		\begin{enumerate}
			\item Consider the following variant
				of the induced assignment instance of agent $i$: \\
				For each virtual product $1 \leq t \leq \ell$, add a 
				dummy virtual buyer with demand $1 + \delta$ and value 
				$u_k = u_{max} / 2^k$ for virtual product $t$
				and value $0$ for other virtual products.
			\item Find social welfare maximizing allocation $\bm{x}_{ik}$ and 
				envy-free prices $\bm{p}_{ik}$ for this variant.
			\item Let $(\hat{\bm{x}}_{ik}, \hat{\bm{p}}_{ik})$ be the projection
				of $(\bm{x}_{ik}, \bm{p}_{ik})$ on the original induced 
				assignment problem of agent $i$, that is, for any 
				$1 \leq s, t \leq \ell$,
				$\hat{x}^{st}_{ik} = x^{st}_{ik}$, $\hat{p}^t_{ik} = p^t_{ik}$.
		\end{enumerate}
	\item Return the $(\hat{\bm{x}}_{ik}, \hat{\bm{p}}_{ik})$, 
		$1 \leq k \leq \log(1/\delta)$, with best revenue.
\end{enumerate}

\begin{lemma}
	Assignment algorithm $\mathcal{C}_R$ always return an envy-free solution 
	$(\bm{x}, \bm{p})$. The revenue is at least a
	$\Omega(1/\log(1/\delta))$ fraction of the optimal social welfare of
	the assignment problem.
\end{lemma}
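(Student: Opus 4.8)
The plan is to verify the two claims separately. For envy-freeness, I would observe that every step of $\mathcal{C}_R$ operates on an augmented fractional assignment instance (the original induced assignment problem plus the dummy virtual buyers for a fixed $k$), and for each such instance it computes a social-welfare-maximizing allocation together with the corresponding dual-optimal prices. By Lemma \ref{lemma:assignment2} this $(\bm{x}_{ik}, \bm{p}_{ik})$ is envy-free for the augmented instance. The key point is that deleting the dummy buyers and restricting attention to the real virtual buyers preserves envy-freeness: the envy-free condition \eqref{eq:envyfree} is a condition quantified over pairs $(s,t)$ with $x^{st}>0$, and for a real buyer $s$ it only constrains $s$'s own utility relative to $\max_k\{w^{sk}-p^k\}$ over the (unchanged) product prices; removing other buyers does not change any $w^{st}$ or any $p^t$. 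Hence $(\hat{\bm{x}}_{ik}, \hat{\bm{p}}_{ik})$ is envy-free for the original induced assignment problem, and since the returned solution is one of these projections, it is envy-free. I should also note that the prices stay non-negative and the projected allocation remains feasible (supply and demand constraints only relax when buyers are removed).

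For the revenue bound, the plan is a geometric-averaging argument in the style of Guruswami et al.\ \cite{guruswami2005profit}. Let $\mathsf{OPT}_i$ denote the optimal social welfare of the induced assignment problem of agent $i$, achieved by $\bm{x}_i$. First I would argue that the contribution to $\mathsf{OPT}_i$ from buyer-product pairs with value below $u_{\max} \cdot \delta / 2$ (equivalently, below $u_k$ for $k$ slightly beyond the range, times a constant) is negligible: each virtual buyer has demand at most $1$ and there are at most $\ell \le 1/\delta$ of them, while a single pair attains value $u_{\max}$ and carries demand at least $\delta$, so the ``light'' pairs contribute at most a constant fraction (say half) of $\mathsf{OPT}_i$ — here one uses that all nonzero demands/supplies are at least $\delta$ and $u_{\max}$ is the largest value used in the optimum. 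Therefore at least $\mathsf{OPT}_i/2$ of the optimal welfare comes from pairs whose value lies in the dyadic range $[u_{\max}/2^{\log(1/\delta)}, u_{\max}]$, i.e.\ in one of the $\log(1/\delta)$ buckets $[u_k, u_{k-1})$. By averaging, some bucket index $k$ accounts for at least $\mathsf{OPT}_i / (2\log(1/\delta))$ of optimal welfare.

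Fix that $k$. In the $k$-th augmented instance, the dummy buyers offer reserve value $u_k$ on every product with demand $1+\delta > \beta^t$ for every real product $t$. Hence in the welfare-maximizing allocation $\bm{x}_{ik}$ of the augmented instance, every unit of every product is sold (a product left unsold by real buyers is fully absorbable by its dummy buyer at value $u_k \ge 0$, strictly improving welfare unless already sold), so the allocation is market-clearing on the product side; by Lemma \ref{lemma:assignment1} its envy-free prices certify optimality, and every product price $p^t_{ik}$ must be at least $u_k$ on products a dummy buyer ends up holding a positive amount of, while products fully sold to real buyers have price at least\ldots — more carefully, I would invoke that a product with a positive assignment to its value-$u_k$ dummy buyer has price exactly the dummy's indifference level, and the standard reserve-price bound then gives that the revenue $\sum_t \beta^t p^t_{ik}$ from the \emph{real} side is at least a constant times the welfare of pairs with value $\ge u_k$, which is $\Omega(\mathsf{OPT}_i/\log(1/\delta))$. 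Summing over agents and invoking that $\sum_i \mathsf{OPT}_i \ge SW^{\mathcal{A}}$ (each $\bm{x}_i$ beats the identity allocation, exactly as in the proof of Theorem \ref{thm:welfarereduction}) finishes the bound.

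The main obstacle I expect is the careful bookkeeping in the last step: precisely relating the revenue extracted on the real-buyer side of the augmented instance to the welfare of the ``heavy'' pairs, while making sure the dummy buyers' presence does not steal welfare that the bound needs — this requires the observation that dummy demand $1+\delta$ strictly exceeds each real supply, so dummies act purely as price floors rather than as genuine competitors for scarce supply, together with a clean statement of the classical single-price/reserve-price guarantee that a value-$u_k$ reserve recovers an $\Omega(1)$ fraction of the welfare of items worth at least $u_k$.
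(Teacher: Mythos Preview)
Your overall strategy matches the paper's: envy-freeness by projection of an envy-free solution of the augmented instance, and the revenue bound by a dyadic reserve-price argument followed by pigeonhole over the $\log(1/\delta)$ levels. The paper organizes the second part slightly differently---it shows directly that $r_k \ge u_k\sum_{s,t:\,w^{st}_i\ge u_k} x^{st}_i$, sums this over $k$ via the telescoping identity $u_k = u_{k-1}-u_k$ to get $\sum_{k\ge 1} r_k \ge \frac12\sum_{s,t}x^{st}_i w^{st}_i$, and then pigeonholes---whereas you bucket the pairs by value first. Either packaging works.

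There is, however, one genuine misstatement you should fix. You write that ``a value-$u_k$ reserve recovers an $\Omega(1)$ fraction of the welfare of items worth at least $u_k$'' and, equivalently, that the revenue at level $k$ is ``at least a constant times the welfare of pairs with value $\ge u_k$.'' This is false: if every such pair has value $M u_k$ for large $M$, the reserve only collects $u_k$ per unit sold, so revenue is a $1/M$ fraction of that welfare. The correct statement is that $r_k \ge u_k \cdot (\text{mass sold to real buyers}) \ge u_k\cdot\sum_{s,t:\,w^{st}_i\ge u_k} x^{st}_i$. In your bucketing language this is enough, because the welfare of bucket $k$ (values in $[u_k,2u_k)$) is at most $2u_k$ times its mass, hence at most $2r_k$; combined with your pigeonhole step this gives $r_k \ge \mathsf{OPT}_i/(4\log(1/\delta))$. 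In the paper's telescoping language it is exactly the summand that telescopes.

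Two minor points. First, the expression ``$\sum_t \beta^t p^t_{ik}$ from the real side'' is not the real-side revenue; that quantity counts supply absorbed by dummies too. The real-side revenue is $\sum_{s\,\text{real},\,t}\hat{x}^{st}_{ik}\,p^t_{ik}$. Second, the price floor $p^t_{ik}\ge u_k$ holds for \emph{all} products, not only those partially held by a dummy: since dummy $d_t$'s demand $1+\delta$ exceeds the total supply, its demand constraint is never tight, so complementary slackness forces $u^{d_t}=0$ and hence $p^t_{ik}\ge u_k$. Finally, the lemma is about a single induced assignment problem, so your last ``summing over agents'' step is not needed here (it belongs to the subsequent proposition).
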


\begin{proof}
	The envy-freeness follows from the fact that 
	$(\hat{\bm{x}}_{ik}, \hat{\bm{p}}_{ik})$, $1 \leq k \leq \log(1/\delta)$,
	are projections of envy-free solutions and thus are also envy-free.
	
	Now we consider the revenue by $\mathcal{C}_R$. We let $r_k$ denote
	the revenue by solution $(\hat{\bm{x}}_{ik}, \hat{\bm{p}}_{ik})$.
	Note that in $(\hat{\bm{x}}_{ik}, \hat{\bm{p}}_{ik})$, all prices are at 
	least $u_k$ and the amount of virtual products that are sold is at least
	$\sum_{s, t : w^{st}_i \geq u_k} x^{st}_i$. Hence, we have 
	$$r_k \geq w_k \sum_{s, t : w_{st} \geq u_k} x^{st}_i \enspace.$$

	Note that if we extend the definition of $u_k$ and let $u_k = u_{max} / 
	2^k$ for all non-negative integer $k$, then we have
	\begin{eqnarray}
		\sum_{k=1}^\infty u_k \sum_{s, t : w^{st}_i \geq u_k} x^{st}_i 
		\notag 
		& = & \sum_{k=1}^\infty (u_{k - 1} - u_k) 
		\sum_{s, t : w^{st}_i \geq u_k} x^{st}_i \notag \\
		& = & \sum_{s=1}^\ell \sum_{t=1}^\ell x^{st}_i 
		\sum_{k : w^{st}_i \geq u_k} (u_{k-1} - u_k) \notag \\
		& = & \sum_{s=1}^\ell \sum_{t=1}^\ell x^{st}_i 
		\max_k \{u_{k-1} : w^{st}_i \geq u_k\} \notag \\
		& \geq & \sum_{s, t} x^{st}_i w^{st}_i \enspace. \label{eq:rev1}
	\end{eqnarray}

	On the other hand, the contribution of the tail is small compared to
	the social welfare.
	\begin{equation}
		\label{eq:rev2}
		\sum_{k=\log(1/\delta)+1}^{\infty} u_k 
		\sum_{s, t : w^{st}_i \geq u_k} x^{st}_i 
		\leq \sum_{k = \log(1/\delta) + 1}^{\infty} w_k 
		\leq \frac{\delta w_{max}}{2}  
		\leq \frac{\sum_{s, t} x^{st}_i w^{st}_i}{2} \enspace. 
	\end{equation}

	The last inequality holds because allocating the most valuable 
	virtual product the one of the virtual buyer is a feasible allocation.
	Hence, consider the difference of the above formulas, 
	$\eqref{eq:rev1} - \eqref{eq:rev2}$, and we get that
	$$\sum_{k = 1}^{\log(1/\delta)} r_k \geq \sum_{k = 1}^{\log(1/\delta)} u_k
	\sum_{s, t : w_{st} \geq u_k} x^{st}_i \geq 
	\frac{\sum_{s, t} x^{st}_i w^{st}_i}{2} \enspace. $$

	Thus, by pigeon-hole-principle at least one of the assignment 
	$(\hat{\bm{x}}_{ik}, \hat{\bm{p}}_{ik})$ provides 
	revenue that is at least a $1/2\log(1/\delta)$ fraction of 
	the social welfare.
\end{proof}

The above lemma leads to the following results for revenue maximization.

\begin{proposition}
	Suppose the social welfare given by allocation algorithm $\mathcal{A}$ is
	$SW^\mathcal{A}$, the mechanism $\mathcal{M}^{\mathcal{C}_R}_\mathcal{A}$ guarantees at least
	$\Omega(SW^\mathcal{A} / \log(1/\delta))$ of revenue.
\end{proposition}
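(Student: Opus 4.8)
The plan is to combine the two results that immediately precede this proposition: the meta-reduction theorem, which guarantees that $\mathcal{M}^{\mathcal{C}}_\mathcal{A}$ is IR and BIC whenever $\mathcal{C}$ produces envy-free solutions, together with the ``revenue equals sum of revenues of the manipulated assignment problems'' proposition, and the just-proved lemma that $\mathcal{C}_R$ returns envy-free solutions with revenue $\Omega(1/\log(1/\delta))$ of the optimal social welfare of each induced assignment problem. So the skeleton is: (i) plug $\mathcal{C} = \mathcal{C}_R$ into the meta-reduction to get that $\mathcal{M}^{\mathcal{C}_R}_\mathcal{A}$ is IR and BIC; (ii) invoke the proposition to write the expected revenue of $\mathcal{M}^{\mathcal{C}_R}_\mathcal{A}$ as $\sum_{i=1}^n R_i$, where $R_i$ is the revenue of the manipulated assignment problem of agent $i$; (iii) lower bound each $R_i$ by $\Omega(1/\log(1/\delta))$ times the optimal social welfare of the induced assignment problem of agent $i$, using the $\mathcal{C}_R$ lemma; (iv) lower bound the optimal social welfare of agent $i$'s induced assignment problem by $\sum_{s=1}^\ell f_i(s)\, w^{ss}_i$, the social welfare of the identity allocation, which by the computation in the proof of Theorem~\ref{thm:welfarereduction} equals $\e{\bm{v}\sim\bm{F},\bm{S}\sim\mathcal{A}(\bm{v})}{v_i(S_i)}$; (v) sum over $i$ to get $\sum_i \e{}{v_i(S_i)} = SW^\mathcal{A}$, yielding total revenue $\Omega(SW^\mathcal{A}/\log(1/\delta))$.

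The main subtlety — and the step I expect to require the most care — is step (ii)--(iii): the revenue bound in the $\mathcal{C}_R$ lemma is stated relative to the optimal social welfare of the (original, $\hat{w}$-valued) induced assignment problem, but the Proposition talks about the revenue of the \emph{manipulated} assignment problem. I need to check that the ``revenue of the manipulated assignment problem of agent $i$'' in fact coincides with the revenue $\sum_t \hat p^t_i \cdot(\text{amount of virtual product }t\text{ sold})$ computed by $\mathcal{C}_R$ on the induced problem — i.e., that the decoupling step faithfully transfers the fractional allocation $\bm{x}_i$ and prices $\bm{p}_i$ into the mechanism without loss. This is exactly the content of the BIC proof of the meta-reduction (the expressions for $\widetilde p^t_i$ there), so it should go through, but it deserves an explicit sentence.

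A second, minor point: the proposition is an existential/asymptotic statement, so I will work with accurate values $\hat{\bm{w}}_i = \bm{w}_i$ (or note that by Lemma~\ref{lemma:relative} with a suitably small constant $\epsilon$, one can estimate them to within a $(1\pm\epsilon)$ factor using polynomially many samples, which perturbs the revenue and the BIC slack by at most constant factors and $O(\epsilon v_{max})$ respectively). Since the statement does not claim polynomial running time or $\epsilon$-BIC, I may simply assume exact values for the proof and remark on the sampling version afterward, mirroring how part~1 of Theorem~\ref{thm:welfarereduction} was handled in the excerpt. With that convention the argument is a short chain of inequalities and requires no new technical ideas beyond assembling the pieces already in place.

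\begin{proof}
Apply the meta-reduction with $\mathcal{C} = \mathcal{C}_R$ and exact estimated values $\hat{\bm{w}}_i = \bm{w}_i$; by part~1 of the meta-reduction theorem the resulting mechanism $\mathcal{M}^{\mathcal{C}_R}_\mathcal{A}$ is IR and BIC. By the preceding proposition, its expected revenue equals $\sum_{i=1}^n R_i$, where $R_i$ is the revenue of the manipulated assignment problem of agent $i$; inspecting the decoupling step and the price formula, $R_i$ is exactly the revenue of the envy-free solution $(\bm{x}_i,\bm{p}_i)$ returned by $\mathcal{C}_R$ on the induced assignment problem of agent $i$. By the $\mathcal{C}_R$ lemma, $R_i \geq \Omega\!\left(1/\log(1/\delta)\right)$ times the optimal social welfare of that induced assignment problem, which in turn is at least the social welfare $\sum_{s=1}^\ell f_i(s)\, w^{ss}_i$ of the identity allocation. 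As computed in the proof of Theorem~\ref{thm:welfarereduction}, $\sum_{s=1}^\ell f_i(s)\, w^{ss}_i = \e{\bm{v}\sim\bm{F},\bm{S}\sim\mathcal{A}(\bm{v})}{v_i(S_i)}$. Summing over $i$ gives
$$\sum_{i=1}^n R_i \geq \Omega\!\left(\frac{1}{\log(1/\delta)}\right) \sum_{i=1}^n \e{\bm{v}\sim\bm{F},\bm{S}\sim\mathcal{A}(\bm{v})}{v_i(S_i)} = \Omega\!\left(\frac{SW^\mathcal{A}}{\log(1/\delta)}\right) \enspace,$$
as claimed. The same bound holds up to constant factors when the values are estimated by sampling as in Lemma~\ref{lemma:relative}, at the cost of $\mathcal{M}^{\mathcal{C}_R}_\mathcal{A}$ being $O(\epsilon v_{max})$-BIC.
\end{proof}
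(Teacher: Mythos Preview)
Your proof is correct and follows exactly the approach the paper intends: the paper states this proposition without a detailed proof, simply writing ``The above lemma leads to the following results for revenue maximization,'' and your argument is precisely the natural way to cash that out --- combine the meta-reduction theorem, the proposition equating mechanism revenue with the sum of assignment-problem revenues, the $\mathcal{C}_R$ lemma, and the identity-allocation lower bound from the proof of Theorem~\ref{thm:welfarereduction}. Your explicit check that the revenue of the manipulated assignment problem coincides with the revenue of $(\bm{x}_i,\bm{p}_i)$ in the induced problem is a detail the paper leaves implicit, so your write-up is if anything more careful than the original.
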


\paragraph{Complementary lower bound.}
The approximation ratio with respect to $SW^\mathcal{A}$ is tight due to the 
following example. Consider the auction problem with only one agent and one
item. Suppose with probability $1 / 2^k$ the agent has value $2^k$ for the item 
for $k = 1, 2, \dots, \log(1 / \delta)$. And with probability $\delta$, the
agent has value $0$ for the item. In this example, the granularity
of the prior distribution is $\delta$. 
The optimal social welfare is 
$\sum_{k=1}^{\log(1/\delta)} \frac{1}{2^k} \, 2^k = \log(1/\delta)$. 
But no BIC mechanism can achieve revenue better than $1$.

\paragraph{Residual surplus.} 
We turn to the residual surplus maximization problem. Note that revenue
and residual surplus are symmetric in the induced assignment problems. 
We will use the following assignment algorithm $\mathcal{C}_{RS}$
based on the same idea we use for the revenue maximization algorithm.

The residual surplus maximizing envy-free algorithm $\mathcal{C}_{RS}$ is as
follows:
\begin{enumerate}
	\item Find the social welfare maximizing allocation $\bm{x}_i = 
		\{x^{st}_i\}_{1 \leq s, t \leq \ell}$.
	\item Suppose $u_{max}$ is the maximal valuation 
		among the virtual buyer-product pair $(s, t)$ with non-zero $x^{st}_i$,
		that is, 
		$$u_{max} = \max \{w^{st}_i : 1 \leq s, t \leq \ell, x^{st}_i > 0\}
		\enspace.$$
	\item Recall that $\delta = \min \{f_i(t) : 1 \leq i \leq n, 1 \leq t \leq
		\ell, f_i(t) > 0\}$ denotes the granularity of the prior distribution. 
		For $1 \leq k \leq \log(1 / \delta)$:
		\begin{enumerate}
			\item Consider the following variant
				of the induced assignment instance of agent $i$: \\
				For each virtual buyer $1 \leq t \leq \ell$, add a 
				dummy virtual product with demand $1 + \delta$ and value 
				$u_k = u_{max} / 2^k$ for virtual buyer $t$
				and value $0$ for other virtual buyer.
			\item Find social welfare maximizing allocation $\bm{x}_{ik}$ and 
				envy-free prices $\bm{p}_{ik}$ for this variant.
			\item Let $(\hat{\bm{x}}_{ik}, \hat{\bm{p}}_{ik})$ be the projection
				of $(\bm{x}_{ik}, \bm{p}_{ik})$ on the original induced 
				assignment problem of agent $i$, that is, for any 
				$1 \leq s, t \leq \ell$,
				$$\hat{x}^{st}_{ik} = x^{st}_{ik} \quad , \quad 
				\hat{p}^t_{ik} = p^t_{ik} \enspace.$$
		\end{enumerate}
	\item Return the $(\hat{\bm{x}}_{ik}, \hat{\bm{p}}_{ik})$, 
		$1 \leq k \leq \log(1/\delta)$, with best revenue.
\end{enumerate}

The proofs of the following lemma and theorem is almost identical to the 
revenue maximization part so we omit the proofs here.

\begin{lemma}
	Assignment algorithm $\mathcal{C}_{RS}$ always return an envy-free solution 
	$(\bm{x}, \bm{p})$. The residual surplus is at least a
	$\Omega(1/\log(1/\delta))$ fraction of the optimal social welfare of
	the assignment problem.
\end{lemma}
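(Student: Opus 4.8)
The plan is to mirror the revenue argument essentially verbatim, exploiting the stated symmetry between revenue and residual surplus in a fractional assignment problem. Recall that in the induced assignment problem the roles of buyers and products are interchangeable in the linear program (P)/(D): a buyer with demand $\alpha^s$ paying price contributes to revenue, while a product with supply $\beta^t$ on which an agent enjoys surplus contributes to residual surplus, and the algorithm $\mathcal{C}_{RS}$ is obtained from $\mathcal{C}_R$ by swapping the dummy virtual buyers for dummy virtual products. So the first step is to record this duality precisely enough that ``revenue of $\mathcal{C}_R$ on instance $\Pi$'' equals ``residual surplus of $\mathcal{C}_{RS}$ on the transposed instance $\Pi^\top$'', and that the optimal social welfare of $\Pi$ equals that of $\Pi^\top$ (the primal LP is symmetric under transposition). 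Given this, every inequality in the $\mathcal{C}_R$ analysis transfers.

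Concretely, I would first argue envy-freeness exactly as before: each returned $(\hat{\bm{x}}_{ik},\hat{\bm{p}}_{ik})$ is the projection onto the original induced assignment problem of an envy-free solution of the augmented instance (the augmentation only adds dummy virtual products with a flat value $u_k$ on one buyer), and a projection of an envy-free solution stays envy-free, since the envy-free condition \eqref{eq:envyfree} is a per-pair condition that survives deleting rows/columns. Then I would bound the residual surplus $r_k$ of the $k$-th solution from below: because each dummy virtual product has value $u_k$ for its designated buyer, in the welfare-maximizing allocation of the augmented instance that buyer's utility (value minus price) is at least $u_k$ whenever it is matched to such a dummy, so the residual surplus accrued is at least $u_k$ times the amount of ``slack'' demand that gets absorbed, which is at least $\sum_{s,t:\,w^{st}_i\ge u_k} x^{st}_i$ by the same counting as in the revenue proof. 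This gives $r_k \ge u_k \sum_{s,t:\,w^{st}_i\ge u_k} x^{st}_i$.

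From there the two displayed telescoping estimates \eqref{eq:rev1} and \eqref{eq:rev2} apply unchanged: summing $u_k\sum_{s,t:\,w^{st}_i\ge u_k} x^{st}_i$ over all $k\ge 1$ is at least $\sum_{s,t} x^{st}_i w^{st}_i$ by the rearrangement $\sum_{k:\,w^{st}_i\ge u_k}(u_{k-1}-u_k)=\max_k\{u_{k-1}:w^{st}_i\ge u_k\}\ge w^{st}_i$, while the tail from $k=\log(1/\delta)+1$ onward is at most $\delta u_{max}/2$, and $u_{max}$ itself is bounded by $\sum_{s,t} x^{st}_i w^{st}_i$ up to a constant because allocating each virtual buyer its most valuable virtual product is feasible. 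Subtracting, $\sum_{k=1}^{\log(1/\delta)} r_k \ge \tfrac12 \sum_{s,t} x^{st}_i w^{st}_i$, so by pigeonhole one index $k$ yields residual surplus at least a $1/(2\log(1/\delta))$ fraction of the optimal social welfare of the assignment problem, which is what the lemma claims.

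The main obstacle, and the one place where ``almost identical'' needs a little care, is step two: verifying that adding a dummy virtual \emph{product} really does force residual surplus rather than merely creating room. In the revenue case the dummy buyer pins down a price floor; in the residual-surplus case we instead need that the welfare-maximizing allocation of the augmented instance, together with its envy-free prices from Lemma~\ref{lemma:assignment2}, gives the original virtual buyers utility at least $u_k$ on the quantities they divert to dummy products — this uses that envy-free utility $\max_k\{w^{sk}_i-p^k_i\}$ is monotone when a new high-value option is inserted, together with market-clearing-type bookkeeping on the supply $1+\delta$. Once that monotonicity is nailed down, everything else is the symmetric copy of the revenue proof, so I would state the duality reduction cleanly and then simply say the remaining steps are identical to the proof of the revenue lemma.
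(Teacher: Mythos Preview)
Your proposal is correct and takes essentially the same approach as the paper, which explicitly omits the proof with the remark that it is ``almost identical to the revenue maximization part.'' In particular, you correctly identify the one place requiring care under the symmetry---that adding a dummy virtual product of value $u_k$ and oversupply forces each buyer's envy-free utility to be at least $u_k$ (via a zero price on the unsaturated dummy), playing the role that the price floor $p^t\ge u_k$ played in the revenue case---after which the projection argument for envy-freeness, the telescoping estimates \eqref{eq:rev1}--\eqref{eq:rev2}, and pigeonhole go through verbatim.
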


\begin{proposition}
	Suppose the social welfare given by allocation algorithm $\mathcal{A}$ is
	$SW^\mathcal{A}$, the mechanism $\mathcal{M}^{\mathcal{C}_{RS}}_\mathcal{A}$ guarantees at least
	$\Omega(SW^\mathcal{A} / \log(1/\delta))$ of residual surplus.
\end{proposition}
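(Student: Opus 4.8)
The plan is to instantiate the general meta-reduction with the envy-free assignment algorithm $\mathcal{C}_{RS}$ and then track the residual surplus through the three layers of the construction, reusing results already established. First I would apply the meta-reduction theorem (part~1, with accurate estimated values $\hat{\bm{w}}_i = \bm{w}_i$, which suffices for the clean statement) to conclude that $\mathcal{M}^{\mathcal{C}_{RS}}_\mathcal{A}$ is IR and BIC, and then invoke the Proposition stating that the expected residual surplus of $\mathcal{M}^{\mathcal{C}}_\mathcal{A}$ equals the sum over agents $i$ of the residual surplus of the manipulated assignment problem of agent $i$. Since the identity allocation $\widetilde{\bm{x}}_i$ is market-clearing and $\widetilde{\bm{p}}_i$ are envy-free for it, that residual surplus is $\sum_{s=1}^\ell f_i(s)\,(\widetilde{w}^{ss}_i - \widetilde{p}^s_i)$; substituting $\widetilde{w}^{st}_i = \sum_r \frac{x^{tr}_i}{f_i(t)}\, w^{sr}_i$ and $\widetilde{p}^t_i = \sum_r \frac{x^{tr}_i}{f_i(t)}\, p^r_i$ from the meta-reduction proof collapses it to $\sum_{s,t} x^{st}_i\,(w^{st}_i - p^t_i)$, which is exactly the social welfare minus the revenue of the solution $(\bm{x}_i,\bm{p}_i)$ that $\mathcal{C}_{RS}$ produces on the induced assignment problem of agent $i$.

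Next I would invoke the Lemma on $\mathcal{C}_{RS}$: the residual surplus of $(\bm{x}_i,\bm{p}_i)$ is at least an $\Omega(1/\log(1/\delta))$ fraction of the optimal social welfare of the induced assignment problem of agent $i$. Finally, exactly as in the final part of the proof of Theorem~\ref{thm:welfarereduction}, the identity allocation ($x^{st}_i = f_i(s)$ when $s=t$ and $0$ otherwise) is feasible for that induced assignment problem, so its optimal social welfare is at least $\sum_{s} f_i(s)\, w^{ss}_i = \e{\bm{v}\sim\bm{F},\,\bm{S}\sim\mathcal{A}(\bm{v})}{v_i(S_i)}$. Summing over $i$ and using linearity of expectation gives $\sum_i(\text{optimal SW of induced problem }i) \ge SW^\mathcal{A}$, and chaining all of the inequalities yields $RS^{\mathcal{M}^{\mathcal{C}_{RS}}_\mathcal{A}} \ge \Omega\!\left(SW^\mathcal{A}/\log(1/\delta)\right)$, as claimed.

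The only genuine computation hiding behind this is the Lemma on $\mathcal{C}_{RS}$, which is the mirror image of the revenue lemma with the roles of the virtual buyers and the virtual products interchanged (equivalently, with the supply/price variables and the demand/utility variables of the dual LP~(D) swapped): the dummy virtual products act as a reserve utility $u_k = u_{max}/2^k$ on the served virtual buyers, one telescopes a geometric series as in~\eqref{eq:rev1}, discards a tail of total contribution at most $\frac{\delta\, u_{max}}{2} \le \frac{1}{2}\sum_{s,t} x^{st}_i w^{st}_i$ as in~\eqref{eq:rev2}, and applies pigeonhole over the $\log(1/\delta)$ values of $k$. I do not foresee a real obstacle beyond this: a possible division by $y_i = 0$ in the decoupling step is vacuous, since the $y^t_i/y_i$ branch of $\mathcal{B}_i$ is then entered with probability zero; and if one uses sampled rather than exact values, Lemmas~\ref{lemma:relative} and~\ref{lemma:absolute} show the bound degrades only by a $1 \pm O(\epsilon)$ factor (and the mechanism becomes $O(\epsilon)$-BIC), which leaves the stated order of magnitude unchanged.
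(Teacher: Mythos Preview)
Your proposal is correct and follows essentially the same approach as the paper, which explicitly omits the proof and states that it ``is almost identical to the revenue maximization part.'' You have in fact written out more of the argument than the paper does, correctly identifying the collapse $\sum_s f_i(s)\,(\widetilde{w}^{ss}_i-\widetilde{p}^s_i)=\sum_{s,t}x^{st}_i(w^{st}_i-p^t_i)$, the use of the $\mathcal{C}_{RS}$ lemma as the buyer/product dual of the $\mathcal{C}_R$ lemma, and the identity-allocation lower bound $\sum_i \mathrm{OPT}_i \ge SW^\mathcal{A}$.
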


\section{Application in combinatorial auctions}

In this section we will briefly illustrates how to use the reduction for social 
welfare in this paper to derive a combinatorial auction mechanism that matches 
the best algorithmic result. 

\paragraph{Combinatorial auctions.} In the combinatorial auctions, we consider
a principal with $m$ items (exactly one copy of each item) and $n$ agents. 
Each agent has a private valuation $v_i \sim F_i$ for subsets of items. 
The goal is to design a protocol to 
allocate the items and to charge prices to the agents.

\medskip

We will show the following corollaries of our reduction for social welfare.

\begin{corollary}
	For combinatorial auctions with sub-additive (or fractionally sub-additive) 
	agents where the prior distributions have finite and poly-size 
	supports, there is a $\left(\frac{1}{2} - \epsilon\right)$-approximate
	(or $\left(1- \frac{1}{e}-\epsilon\right)$-approximate respectively),
	$\epsilon v_{max}$-BIC, and IR mechanism for social welfare maximization.
	The running time is polynomial in the input size and $1 / \epsilon$. 
\end{corollary}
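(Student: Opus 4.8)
The point is that both corollaries follow immediately once we feed the right \emph{algorithm} into the reduction of Section~\ref{sec:socialwelfare}. As the black box $\mathcal{A}$ I would take a known algorithm for social welfare maximization under truthfully reported valuations: Feige's algorithm \cite{feige2006maximizing} is a polynomial-time $\tfrac{1}{2}$-approximation for combinatorial auctions with sub-additive agents, and the algorithm of \cite{dobzinski-two,feige2006approximation} is a polynomial-time $(1-\tfrac{1}{e})$-approximation for fractionally sub-additive agents (both under the standard value and demand oracle access to the valuations, which is also how the finite-support distributions are presented here). Write $\alpha \in \{\tfrac{1}{2},\, 1-\tfrac{1}{e}\}$ for the approximation factor, so that $SW^{\mathcal{A}} \ge \alpha\,\mathsf{OPT}$. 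A combinatorial auction with a product prior of finite, poly-size support is exactly a multi-parameter mechanism design problem in the sense of this paper (here $I_i = 2^{[m]}$ and $\bm{J}$ is the set of disjoint allocations), so the reduction applies without change.

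Next I would run the mechanism $\mathcal{M}_{\mathcal{A}}$ on this $\mathcal{A}$. Individual rationality is automatic, since the price $p^t_i\, v^s_i(S_i)/\hat{w}^{st}_i$ never exceeds $v^s_i(S_i)$ (the envy-free prices of the induced assignment problem satisfy $p^t_i \le \hat{w}^{st}_i$). For the incentive and welfare guarantees I would estimate the induced-assignment values $w^{st}_i$ by the sampling procedure of Section~\ref{sec:socialwelfare} --- $\mathrm{poly}(n,\ell,1/\epsilon)$ samples, one black-box call to $\mathcal{A}$ each --- and then invoke part~2 of Theorem~\ref{thm:welfarereduction}: once every $\hat{w}^{st}_i \in [(1-\epsilon)w^{st}_i, (1+\epsilon)w^{st}_i]$, the mechanism is IR, $4\epsilon v_{max}$-BIC, and achieves social welfare at least $(1-2\epsilon)\,SW^{\mathcal{A}} \ge (\alpha - 2\alpha\epsilon)\,\mathsf{OPT} \ge (\alpha - \epsilon)\,\mathsf{OPT}$; rescaling $\epsilon$ by a constant yields the claimed $(\alpha-\epsilon)$-approximate, $\epsilon v_{max}$-BIC, IR mechanism. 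The running time is polynomial in the input size and $1/\epsilon$: apart from the $\mathrm{poly}(n,\ell,1/\epsilon)$ calls to $\mathcal{A}$, the reduction only solves, for each of the $n$ agents, the linear programs (P) and (D) of its induced assignment problem, each of size $\mathrm{poly}(\ell)$.

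The main obstacle is the value-estimation step --- specifically, whether $\mathrm{poly}(n,\ell,1/\epsilon)$ samples suffice for the \emph{relative} accuracy demanded by part~2 of Theorem~\ref{thm:welfarereduction}. In a combinatorial auction $v^s_i(S_i)$ ranges over $[0, v_{max}]$ with a possibly tiny mean, so its coefficient of variation need not be polynomially bounded and Lemma~\ref{lemma:relative} may fail on the small entries; switching to the absolute-error estimate of Lemma~\ref{lemma:absolute} costs an additive $O(\epsilon v_{max})$ in welfare, which is harmless only when $v_{max}$ is comparable to $SW^{\mathcal{A}}$. I would bridge this either by assuming the coefficient of variation is polynomially bounded (as the body of the paper already does when stating its FPTAS reduction), or by first estimating $SW^{\mathcal{A}}$ by sampling and truncating every valuation at a cap $C = \Theta(\mathsf{OPT})$ --- which, since each agent's value in the welfare-optimal allocation is at most $\mathsf{OPT}$, changes neither $\mathsf{OPT}$ nor the $\alpha$-approximation --- and then rerunning the argument with $v_{max}$ replaced by $C$. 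Carrying out the truncation while keeping the exact BIC translation (the agents' true valuations are uncapped, so one must check the capped mechanism is still approximately incentive compatible for the uncapped problem) is the delicate point; everything else is the bookkeeping already performed in Theorem~\ref{thm:welfarereduction} and Lemmas~\ref{lemma:relative} and~\ref{lemma:absolute}.
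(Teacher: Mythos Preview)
Your high-level plan --- feed an $\alpha$-approximation algorithm into the black-box reduction and read off the guarantees from Theorem~\ref{thm:welfarereduction} --- is exactly the paper's plan, and you have correctly isolated the one genuine obstacle: the relative-error estimate in Lemma~\ref{lemma:relative} needs the coefficient of variation of $v^s_i(S_i)$ to be polynomially bounded, and for a generic black-box $\mathcal{A}$ there is no reason it should be.

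Where you and the paper diverge is in how that obstacle is removed. You propose either to assume the bound or to cap the valuations at $\Theta(\mathsf{OPT})$; the paper instead \emph{engineers the algorithm $\mathcal{A}$} so that the bound holds automatically. Concretely, it does not use Feige's algorithm off the shelf. It first writes and solves a Bayesian version of the configuration LP, takes a basic optimal solution (hence at most $nm\ell$ non-zero entries), and then \emph{filters} out every entry $x^*_{i,t,S}$ below $\epsilon/(nm\ell)$. This loses at most an $\epsilon$ fraction of the LP value, and Feige's rounding is then applied to the filtered fractional solution $\hat{\bm{x}}$. The payoff is that now the tentative set $\widetilde{S}_i$ is drawn from a distribution whose minimum positive probability is at least $\epsilon/(nm\ell)$; combined with the per-sample guarantee $\e{}{v^s_i(S_i)\mid \widetilde{S}_i}\ge \tfrac{1}{2}\,v^s_i(\widetilde{S}_i)$ from Feige's rounding, this yields the polynomial bound
\[
\frac{\sd{}{v^s_i(S_i)}}{\e{}{v^s_i(S_i)}}\ \le\ \sqrt{\frac{4nm\ell}{\epsilon}}
\]
needed for Lemma~\ref{lemma:relative}. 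In other words, the paper buys the CV bound by controlling the \emph{support} of the algorithm's internal randomness, not by truncating the valuation range. Your capping idea might be made to work, but as you note it reopens the BIC analysis for the uncapped agents; the filtering trick sidesteps that entirely and stays inside the framework of Theorem~\ref{thm:welfarereduction}.
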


\paragraph{Algorithm.}

We will consider a variant of the LP-based algorithms by Feige 
\cite{feige2006maximizing} and use the reduction for social welfare to
convert it into an IR and $\epsilon v_{max}$-BIC mechanism. More concretely,
we will consider the Bayesian version of the standard 
social welfare maximization linear program (CA):
\begin{align*}
	\text{Maximize} ~ ~ \sum_i \sum_t \sum_S & ~ f_i(t) \, v^t_i(S) \,
	x_{i, t, S} & & \text{s.t.} \\
	\sum_i \sum_t \sum_{S : j \in S} f_i(t) \,x_{i, t, S} & ~ \leq ~ 1 & &
	\forall j \\
	\sum_S x_{i, t, S} & ~ \leq ~ 1 & & \forall i, t \\
	x_{i, t, S} & ~ \geq ~ 0 & & \forall i, t, S
\end{align*}

In this LP, $x_{i, t, S}$ denote the probability that agent $i$ is allocated 
with a subset of items $S$ conditioned on its valuation is $v^t_i$.
This LP can be solved in polynomial time by the standard primal dual 
technique via demand queries. See \cite{dobzinski2005approximation} for more
details. We let $LP^*$ denote the optimal value of this LP.
Moreover, for any basic feasible optimal solution of the above LP,
there are at most $nm\ell$ non-zero entries since there are only $nm\ell$
non-trivial constraints. Hence, we have the following lemma:

\begin{lemma}
	In poly-time we can find an optimal solution $\bm{x}^*$ to (CA) with at 
	most $nm\ell$ non-zero entries.
\end{lemma}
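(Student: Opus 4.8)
The plan is to invoke the standard equivalence between separation and optimization for linear programs. Although (CA) has exponentially many variables $x_{i,t,S}$, its dual has only $m + n\ell$ variables: one price $p_j \ge 0$ per item $j$ (from the item constraints) and one multiplier $u_{i,t} \ge 0$ per agent–type pair (from the assignment constraints), minimizing $\sum_j p_j + \sum_{i,t} f_i(t)\, u_{i,t}$ subject to one constraint $u_{i,t} + f_i(t)\sum_{j \in S} p_j \ge f_i(t)\, v^t_i(S)$ for every triple $(i,t,S)$. The first and only non-routine step is to observe that a separation oracle for this dual is precisely a demand query: given a candidate dual point, the most violated constraint among those indexed by a fixed pair $(i,t)$ is obtained by maximizing $v^t_i(S) - \sum_{j \in S} p_j$ over all $S$, which is solvable in polynomial time for the valuation classes we care about (sub-additive and fractionally sub-additive), exactly as used in \cite{dobzinski2005approximation}.

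Given this oracle, I would run the ellipsoid method on the dual. It terminates in time polynomial in the input size, and over its execution it queries the separation oracle only polynomially many times; let $\mathcal{T}$ be the resulting (polynomial-size) set of triples $(i,t,S)$ that were ever returned as violated. By the Gr\"otschel--Lov\'asz--Schrijver argument, the dual restricted to the constraints indexed by $\mathcal{T}$ has the same optimal value as the full dual, namely $LP^*$; taking the LP dual once more, the primal (CA) restricted to the variables $\{x_{i,t,S} : (i,t,S) \in \mathcal{T}\}$ is a polynomial-size linear program whose optimal value is also $LP^*$.

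Finally I would solve this polynomial-size restricted LP and extract a basic feasible optimal solution. A basic feasible solution has at most as many non-zero coordinates as the program has non-trivial constraints, i.e.\ the $m$ item constraints together with the $n\ell$ assignment constraints, hence at most $m + n\ell \le nm\ell$ non-zeros; padding with zeros on all remaining coordinates yields an optimal solution $\bm{x}^*$ to (CA) with at most $nm\ell$ non-zero entries. The main obstacle is the first step — certifying that a demand oracle yields a polynomial-time separation oracle for the dual and hence that the ellipsoid run produces a polynomially bounded support set $\mathcal{T}$ on which the primal optimum is already attained; everything afterwards is the textbook ``separation $\Rightarrow$ optimization'' reduction plus the elementary bound on the support of a basic feasible solution.
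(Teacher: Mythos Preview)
Your proposal is correct and is precisely the argument the paper gestures at: the paper simply cites the ``standard primal-dual technique via demand queries'' from \cite{dobzinski2005approximation} and then invokes the basic-feasible-solution bound on the number of non-zeros, which is exactly the ellipsoid-on-the-dual-plus-BFS route you spell out in detail. Two cosmetic slips worth fixing: the dual objective is $\sum_j p_j + \sum_{i,t} u_{i,t}$ (no factor $f_i(t)$, since the primal right-hand sides are all $1$), and demand queries for sub-additive valuations are not polynomial-time computable from value queries---they are assumed as oracle access, as in the paper and in \cite{dobzinski2005approximation}.
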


Next, we will filter this solution $\bm{x}^*$ by removing insignificant 
entries. We let $\hat{x}_{i, t, S} = x^*_{i, t, S} < \epsilon / n m \ell$.
Note that $LP^* \geq f_i(t) v^t_i(S)$ for any $i$, $t$, and $S$ since always
allocating subset $S$ to agent $i$ is a feasible allocation. We get that
$\hat{\bm{x}}$ is a feasible solution to (CA) with value at least 
$(1 - \epsilon) LP^*$.

Then, we will use the rounding algorithms by Feige \cite{feige2006maximizing}
to get a $\frac{1}{2}$-rounding
for sub-additive agents and a $\left(1-\frac{1}{e}\right)$-rounding for
fractionally sub-additive agents: 
\begin{enumerate}
	\item Allocate a tentative subset of items
		$\widetilde{S}_i$ to each agent $i$, $1 \leq i \leq n$, according
		to the reported valuation $v'_i = v^t_i$ and 
		$\hat{x}_{i, t, \widetilde{S}_i}$.
	\item Resolve conflicts properly by choosing $S_i \subseteq 
		\widetilde{S}_i$ so that $\bm{S} = (S_1, \dots, S_n)$ is a feasible
		allocation.
\end{enumerate}

By extending Feige's original proof, we can show that there is a randomized
algorithm for choosing $\bm{S}$ such that for sub-additive agents, we have:
\begin{equation}
	\label{eq:suba}
	\e{\bm{v}_{-i}, \bm{S}}{v_i(S_i)} \geq \frac{1}{2} \, v_i(\widetilde{S}_i)
	\enspace.
\end{equation}

And for fractionally sub-additive agents, we have:
\begin{equation}
	\label{eq:fracsuba}
	\e{\bm{v}_{-i}, \bm{S}}{v_i(S_i)} \geq 
	\left(1 - \frac{1}{e}\right) \, v_i(\widetilde{S}_i)
	\enspace.
\end{equation}

We will omit the proof in this extended abstract.
We denote the above algorithm as $\mathcal{A}$. Then, $\mathcal{A}$ provides 
$\left(\frac{1}{2} - \epsilon\right)$-approximation for sub-additive agents 
and $\left(1 - \frac{1}{e} - \epsilon\right)$-approximation for 
fractionally sub-additive agents.

\paragraph{Estimating values.}

By Theorem \ref{thm:welfarereduction} and \ref{thm:otherreduction}, we only
need to show how to estimate the values $\bm{w}_i$, $1 \leq i \leq n$, for
the induced assignment problem of agent $i$ efficiently. Further, by
Lemma \ref{lemma:relative}, we can efficiently estimate the values 
$\bm{w}_i = \{w^{st}_i\}_{1 \leq s, t \leq \ell}$, $1 \leq i \leq n$, if
the following lemma holds.

\begin{lemma}
	For any $1 \leq i \leq n$, and any $1 \leq s, t \leq \ell$,
	$$\frac{\sd{\bm{v}_{-i}, \bm{S}\sim\mathcal{A}(v^t_i, \bm{v}_{-i})}{v^s_i(S_i)}}
	{\e{\bm{v}_{-i}, \bm{S}\sim\mathcal{A}(v^t_i, \bm{v}_{-i})}{v^s_i(S_i)}}
	\leq \sqrt{\frac{4nm\ell}{\epsilon}} \enspace.$$
\end{lemma}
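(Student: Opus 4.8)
The plan is to bound the ratio of standard deviation to mean of the random variable $v^s_i(S_i)$ by controlling the second moment against the square of the first moment. The main leverage is the structure of algorithm $\mathcal{A}$: the tentative set $\widetilde{S}_i$ is drawn according to the filtered LP solution $\hat{\bm{x}}$, whose nonzero entries are all at least $\epsilon/(nm\ell)$, and then $S_i \subseteq \widetilde{S}_i$ so that $v^s_i(S_i) \le v^s_i(\widetilde{S}_i)$ by monotonicity. First I would write
$$\sd{\bm{v}_{-i}, \bm{S}}{v^s_i(S_i)}^2 \le \e{\bm{v}_{-i}, \bm{S}}{v^s_i(S_i)^2} \le \e{\bm{v}_{-i}, \bm{S}}{v^s_i(S_i) \cdot v^s_i(\widetilde{S}_i)} \le v_{max} \cdot \e{\bm{v}_{-i}, \bm{S}}{v^s_i(\widetilde{S}_i)} \enspace,$$
using $v^s_i(S_i) \le v^s_i(\widetilde{S}_i) \le v_{max}$. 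Since the choice of $\widetilde{S}_i$ depends only on the reported valuation $v^t_i$ (not on $\bm{v}_{-i}$), the quantity $\e{}{v^s_i(\widetilde{S}_i)} = \sum_S \hat{x}_{i,t,S} \, v^s_i(S)$ is a fixed number, call it $W$.

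Next I would get a matching lower bound on the mean $\e{\bm{v}_{-i},\bm{S}}{v^s_i(S_i)}$. The rounding guarantee as stated (equations \eqref{eq:suba}, \eqref{eq:fracsuba}) is phrased for the \emph{true} valuation $v_i$ reported; but the analogous bound holds for $v^s_i$ in place of $v_i$ since Feige's resolution step is agnostic to which monotone subadditive function we evaluate against — so $\e{\bm{v}_{-i},\bm{S}}{v^s_i(S_i)} \ge \frac{1}{2} \, \e{}{v^s_i(\widetilde{S}_i)} = \frac{W}{2}$ (and $\ge (1-\tfrac1e) W$ in the fractionally subadditive case; I will use the weaker factor $\tfrac12$). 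Hence
$$\frac{\sd{}{v^s_i(S_i)}^2}{\e{}{v^s_i(S_i)}^2} \le \frac{v_{max} \cdot W}{(W/2)^2} = \frac{4 v_{max}}{W} \enspace.$$
It remains to show $W = \sum_S \hat{x}_{i,t,S} v^s_i(S)$ is not too small. Here I use that $W > 0$ forces at least one nonzero $\hat{x}_{i,t,S^*}$ with $v^s_i(S^*) > 0$; any surviving nonzero entry satisfies $\hat{x}_{i,t,S^*} \ge \epsilon/(nm\ell)$, and I need $v^s_i(S^*) \ge v_{max} \cdot (\text{something})$ — this is the delicate point. The clean way is to compare against the single best set: $v^s_i(\widetilde S_i)$ is at least the contribution of the best subset times its probability, but more directly, one observes $\e{}{v^s_i(\widetilde S_i)} \ge \hat{x}_{i,t,S} v^s_i(S)$ for \emph{every} $S$, and we may pick $S$ to be a singleton $\{j\}$ realizing $v_{max}$ restricted appropriately; combined with $\hat{x}_{i,t,S} \ge \epsilon/(nm\ell)$ whenever that entry survives the filter, one gets $W \ge \frac{\epsilon}{nm\ell} v_{max}$, and substituting yields the ratio bound $\le \sqrt{4nm\ell/\epsilon}$.

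The step I expect to be the main obstacle is precisely justifying $W \ge \frac{\epsilon \, v_{max}}{nm\ell}$: a priori $v^s_i$ and $v^t_i$ are unrelated valuations, so it is conceivable that the LP solution $\hat{\bm{x}}$ conditioned on type $t$ assigns only sets on which $v^s_i$ happens to be tiny, making the ratio blow up. Resolving this requires an honest argument that either (i) the filtered LP always retains a set of large $v^s_i$-value because the unfiltered LP optimum does and the filter only removes a $(1-\epsilon)$ fraction of total value, or (ii) one restricts attention to the case $W$ bounded below and handles $W$ near zero separately (when $\e{}{v^s_i(S_i)}$ is also near zero the estimated value is harmless). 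I would formalize option (i): since $LP^* \ge f_i(t) v^t_i(S)$ for all $S$ and the surviving entries carry $(1-\epsilon)LP^*$ of the objective, and since the per-type constraint $\sum_S x_{i,t,S} \le 1$ bounds the mass, a counting argument shows some surviving $(i,t,S)$ has $\hat{x}_{i,t,S} v^s_i(S)$ comparable to $v_{max}/(nm\ell)$, up to the $\epsilon$ factor. The remaining calculations (Cauchy–Schwarz, the monotonicity inequalities, and plugging in the $\tfrac12$ rounding factor) are routine.
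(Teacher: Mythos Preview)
Your overall structure is right---condition on the tentative set $\widetilde{S}_i$, use monotonicity to bound the second moment, use the rounding guarantee \eqref{eq:suba} to lower-bound the mean, and exploit the filter $\hat{x}_{i,t,S} \geq \epsilon/(nm\ell)$---but the step you flag as delicate is in fact a genuine gap, and your proposed repair does not close it.

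The problem is the appearance of $v_{max}$. You bound the second moment by $v_{max}\cdot W$ with $W = \sum_S \hat{x}_{i,t,S}\,v^s_i(S)$, and then need $W \geq \epsilon\,v_{max}/(nm\ell)$. But nothing ties $v^s_i$ to the LP objective that produced $\hat{\bm{x}}$: the entries $\hat{x}_{i,t,\cdot}$ were chosen to make $v^t_i$ large, and the filter only guarantees that surviving entries carry at least $(1-\epsilon)$ of the \emph{$v^t$-weighted} objective. Your option (i) therefore says nothing about $v^s_i$-values on the surviving sets. Your remark ``pick $S$ to be a singleton $\{j\}$ realizing $v_{max}$'' fails because $\hat{x}_{i,t,\{j\}}$ may well be zero.

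The fix---and this is exactly what the paper does---is to never introduce $v_{max}$. Replace it by $M := \max\{v^s_i(S) : \hat{x}_{i,t,S} > 0\}$, the maximum of $v^s_i$ over the \emph{support} of $\hat{x}_{i,t,\cdot}$. Then the second-moment bound becomes
\[
\e{}{v^s_i(S_i)^2} \;\leq\; \sum_S \hat{x}_{i,t,S}\,v^s_i(S)^2 \;\leq\; M \cdot W \enspace,
\]
and crucially $M \leq W/\hat{x}_{\min} \leq (nm\ell/\epsilon)\,W$, since the term achieving the max already contributes $\hat{x}_{\min}\cdot M$ to $W$. Combining with $\e{}{v^s_i(S_i)} \geq W/2$ gives
\[
\frac{\sd{}{v^s_i(S_i)}^2}{\e{}{v^s_i(S_i)}^2} \;\leq\; \frac{M\cdot W}{(W/2)^2} \;=\; \frac{4M}{W} \;\leq\; \frac{4nm\ell}{\epsilon}\enspace,
\]
with no reference to $v_{max}$ and no need for your options (i) or (ii). The paper packages this as the single inequality $\sum_S \hat{x}_{i,t,S}\,v^s_i(S)^2 \leq \hat{x}_{\min}^{-1}\bigl(\sum_S \hat{x}_{i,t,S}\,v^s_i(S)\bigr)^2$, which is exactly the $M \leq W/\hat{x}_{\min}$ observation.
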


\begin{proof}
	By inequalities \eqref{eq:suba} and \eqref{eq:fracsuba}, we get that
	conditioned on $\widetilde{S}_i$ being chosen as the tentative set,
	\begin{equation*}
		{\e{\bm{v}_{-i}, \bm{S}\sim\mathcal{A}(v^t_i, \bm{v}_{-i})}
		{v^s_i(S_i) : \widetilde{S}_i}}
		\geq \frac{1}{2} \, v^s_i\left(\widetilde{S}_i\right) \enspace.
	\end{equation*}

	We also have that	
	$$\sd{\bm{v}_{-i}, \bm{S}\sim\mathcal{A}(v^t_i, \bm{v}_{-i})}
	{v^s_i(S_i) : \widetilde{S}_i} \leq \max \left\{v^s_i(S_i) : 
	\widetilde{S}_i\right\} \leq v^s_i(\widetilde{S}_i) \enspace.$$

	Hence, 
	\begin{align*}
		\sd{\bm{v}_{-i}, \bm{S}\sim\mathcal{A}(v^t_i, \bm{v}_{-i})}
		{v^s_i(S_i)}^2
		= \, & \sum_{\widetilde{S}_i}
		\hat{x}_{i, t, \widetilde{S}_i} \,
		\sd{\bm{v}_{-i}, \bm{S}\sim\mathcal{A}(v^t_i, \bm{v}_{-i})}
		{v^s_i(S_i) : \widetilde{S}_i}^2 \\
		\leq \, & \sum_{\widetilde{S}_i}
		\hat{x}_{i, t, \widetilde{S}_i} \, v^s_i(\widetilde{S}_i)^2 \\
		\leq \, & \frac{1}{\min \left\{\hat{x}_{i, t, \widetilde{S}_i} > 0
		\right\}} \, \left(\sum_{i, t, \widetilde{S}_i}
		\hat{x}_{i, t, \widetilde{S}_i} \, v^s_i(\widetilde{S}_i)\right)^2 \\
		\leq \, & \frac{nm\ell}{\epsilon} \, \left(\sum_{\widetilde{S}_i}
		\hat{x}_{i, t, \widetilde{S}_i} \,
		\e{\bm{v}_{-i}, \bm{S}\sim\mathcal{A}(v^t_i, \bm{v}_{-i})}
		{v^s_i(S_i) : \widetilde{S}_i}\right)^2 \\
		\leq \, & \frac{4nm\ell}{\epsilon} \, 
		\e{\bm{v}_{-i}, \bm{S}\sim\mathcal{A}(v^t_i, \bm{v}_{-i})}{v^s_i(S_i)}^2 
		\enspace.
	\end{align*}
\end{proof}

\bibliographystyle{alpha}
\bibliography{auctionbiblio}

\newcommand{\etalchar}[1]{$^{#1}$}
\begin{thebibliography}{BGGM10}

\bibitem[BGGM10]{bhattacharya2009budget}
S.~Bhattacharya, G.~Goel, S.~Gollapudi, and K.~Munagala.
\newblock {Budget constrained auctions with heterogeneous items}.
\newblock In {\em ACM 42nd Annual ACM Symposium on Theory of Computing (STOC)},
  2010.

\bibitem[CHMS10]{chawla2009multi}
S.~Chawla, J.D. Hartline, D.~Malec, and B.~Sivan.
\newblock {Multi-parameter mechanism design and sequential posted pricing}.
\newblock In {\em ACM 42nd Annual ACM Symposium on Theory of Computing (STOC)},
  2010.

\bibitem[Cla71]{clarke1971multipart}
E.H. Clarke.
\newblock {Multipart pricing of public goods}.
\newblock {\em Public choice}, 11(1):17--33, 1971.

\bibitem[DNS05]{dobzinski2005approximation}
S.~Dobzinski, N.~Nisan, and M.~Schapira.
\newblock {Approximation algorithms for combinatorial auctions with
  complement-free bidders}.
\newblock In {\em ACM 37th Annual ACM Symposium on Theory of Computing (STOC)},
  page 618. ACM, 2005.

\bibitem[DNS06]{dobzinski2006truthful}
S.~Dobzinski, N.~Nisan, and M.~Schapira.
\newblock {Truthful randomized mechanisms for combinatorial auctions}.
\newblock In {\em ACM 38th Annual ACM Symposium on Theory of Computing (STOC)},
  page 652. ACM, 2006.

\bibitem[Dob07]{dobzinski-two}
S.~Dobzinski.
\newblock {Two randomized mechanisms for combinatorial auctions}.
\newblock {\em Approximation, Randomization, and Combinatorial Optimization.
  Algorithms and Techniques}, pages 89--103, 2007.

\bibitem[DR10]{dughmi2010black}
S.~Dughmi and T.~Roughgarden.
\newblock {Black-box randomized reductions in algorithmic mechanism design}.
\newblock In {\em 51st Annual IEEE Symposium on Foundations of Computer Science
  (FOCS)}, 2010.

\bibitem[Fei06]{feige2006maximizing}
U.~Feige.
\newblock {On maximizing welfare when utility functions are subadditive}.
\newblock In {\em ACM 38th Annual ACM Symposium on Theory of Computing (STOC)},
  page~50. ACM, 2006.

\bibitem[FV06]{feige2006approximation}
U.~Feige and J.~Vondrak.
\newblock {Approximation algorithms for allocation problems: Improving the
  factor of 1-1/e}.
\newblock In {\em 47th Annual IEEE Symposium on Foundations of Computer Science
  (FOCS)}, pages 667--676, 2006.

\bibitem[GHK{\etalchar{+}}05]{guruswami2005profit}
V.~Guruswami, J.D. Hartline, A.R. Karlin, D.~Kempe, C.~Kenyon, and F.~McSherry.
\newblock {On profit-maximizing envy-free pricing}.
\newblock In {\em Annual ACM-SIAM Symposium on Discrete algorithms}, page 1173.
  Society for Industrial and Applied Mathematics, 2005.

\bibitem[Gro73]{groves1973incentives}
T.~Groves.
\newblock {Incentives in teams}.
\newblock {\em Econometrica: Journal of the Econometric Society},
  41(4):617--631, 1973.

\bibitem[GS99]{gul1999walrasian}
F.~Gul and E.~Stacchetti.
\newblock {Walrasian Equilibrium with Gross Substitutes* 1}.
\newblock {\em Journal of Economic Theory}, 87(1):95--124, 1999.

\bibitem[HKM11]{jason2010}
J.~Hartline, R.~Kleinberg, and A.~Malekian.
\newblock {Bayesian incentive compatibility via matchings}.
\newblock In {\em Annual ACM-SIAM Symposium on Discrete algorithms, to appear},
  2011.

\bibitem[HL10]{hartline2009bayesian}
J.D. Hartline and B.~Lucier.
\newblock {Bayesian algorithmic mechanism design}.
\newblock In {\em ACM 42nd Annual ACM Symposium on Theory of Computing (STOC)},
  2010.

\bibitem[HR08]{hartline2008optimal}
J.D. Hartline and T.~Roughgarden.
\newblock {Optimal mechanism design and money burning}.
\newblock In {\em ACM 40th Annual ACM Symposium on Theory of Computing (STOC)},
  2008.

\bibitem[LS05]{lavi2005truthful}
R.~Lavi and C.~Swamy.
\newblock {Truthful and near-optimal mechanism design via linear programming}.
\newblock In {\em 49th Annual IEEE Symposium on Foundations of Computer Science
  (FOCS)}, pages 595--604, 2005.

\bibitem[Mye81]{myerson1981optimal}
R.B. Myerson.
\newblock {Optimal auction design}.
\newblock {\em Mathematics of operations research}, 6(1):58--73, 1981.

\bibitem[PSS08]{papadimitriou2008hardness}
C.~Papadimitriou, M.~Schapira, and Y.~Singer.
\newblock {On the hardness of being truthful}.
\newblock In {\em 49th Annual IEEE Symposium on Foundations of Computer Science
  (FOCS)}, pages 250--259, 2008.

\bibitem[Roc87]{rochet1987necessary}
J.C. Rochet.
\newblock {A necessary and sufficient condition for rationalizability in a
  quasi-linear context}.
\newblock {\em Journal of Mathematical Economics}, 16(2):191--200, 1987.

\bibitem[Vic61]{vickrey1961counterspeculation}
W.~Vickrey.
\newblock {Counterspeculation, auctions, and competitive sealed tenders}.
\newblock {\em Journal of finance}, 16(1):8--37, 1961.

\end{thebibliography}

\end{document}